%% file: main.tex
\documentclass[5p,times]{elsarticle}

\usepackage[numbers]{natbib}
\usepackage{graphicx}
\usepackage{url}
\usepackage{xcolor}
\usepackage{stfloats}
\usepackage{tcolorbox}
\usepackage{amsmath}
\usepackage{amsfonts}
\usepackage{multirow}
\usepackage{booktabs}
\usepackage{listings}
\usepackage{pifont}
\usepackage{lineno}
\usepackage{hhline}
\usepackage{diagbox}        
\usepackage{adjustbox}      
\usepackage{subcaption} 
\usepackage{amsthm}
\newtheorem{proposition}{Proposition}
\newtheorem{remark}{Remark}

\definecolor{verylightgray}{rgb}{.97,.97,.97}

\newcommand{\syh}[1]{\textcolor{black}{#1}}
\newcommand{\yg}[1]{\textcolor{black}{#1}}
\newcommand{\tool}{\texttt{VeriSelector}}

\journal{Information Sciences}

\begin{document}

\begin{frontmatter}
	
\title{Quality over Quantity: Diversity-Aware Data Selection for Efficient Verilog Code Generation}

\author[NPU,NTNC]{Yiheng Shen}
\ead{yiheng.s@ntnc.edu.cn}

\author[NPU]{Wei Zheng\corref{first_corresponding_author}}
\cortext[first_corresponding_author]{Corresponding author}
\ead{wzheng@nwpu.edu.cn}

\author[NPU]{Xiao Wei}
\ead{wxgd@mail.nwpu.edu.cn}

\author[ECNU]{Hao Shen}
\ead{52285902020@stu.ecnu.edu.cn}

\author[NTU]{Xiang Chen}
\ead{xchencs@ntu.edu.cn}

\author[ZJU]{Guang Yang\corref{second_corresponding_author}}
\cortext[second_corresponding_author]{Corresponding author}
\ead{guang_yang@zju.edu.cn}

\address[NPU]{Northwestern Polytechnical University, Xi'an, China}
\address[NTNC]{Nantong Normal College, Nantong, China}
\address[ECNU]{East China Normal University, Shanghai, China}
\address[NTU]{Nantong University, Nantong, China}
\address[ZJU]{Zhejiang University, Hangzhou, China}

\begin{abstract}
Large Language Models (LLMs) have shown remarkable potential in Verilog code generation, yet existing datasets contain considerable noise and redundancy. 
\syh{Prior data selection methods address only isolated quality aspects, neglect the global diversity of the training set, and cannot capture Verilog-specific structural semantics.}
To bridge this gap, we propose {\tool}, the first data selection framework for Verilog code generation that jointly optimizes quality and diversity. We formulate the selection problem as a constrained bi-objective subset selection problem and solve it via a three-stage approximation. For quality, a multi-granularity pipeline first verifies functional correctness through testbench simulation and then filters misaligned samples via Instruction-Following Difficulty (IFD) scoring. For diversity, 109-dimensional Verilog-specific structural features (AST, CFG, and Netlist) are fused with textual embeddings for clustering-based diversity modeling. A proportional adaptive sampling strategy then allocates per-cluster quotas guided by IFD ranks, with a provable distribution preservation guarantee. Experiments on three LLMs and three benchmarks show that {\tool} outperforms full-dataset training and state-of-the-art baselines using only 20\%--25\% of the data, achieving Performance Retention Rates above 118\% and reducing training time by over 80\%. Notably, {\tool} improves average Pass@1 by 18.49\%--29.43\% over full-dataset training and by 1.36\%--7.95\% over the best-performing baseline across all evaluated models.
\end{abstract}

\begin{keyword}
Verilog Code Generation, Large Language Models, Hardware Description Language, Data Selection
\end{keyword}

\end{frontmatter}

\begin{sloppypar}

\input{sections/1.intro}
\input{sections/2.preliminaries}
\input{sections/3.method}
\input{sections/4.setup}
\input{sections/5.results}
\input{sections/6.discussion}
\input{sections/7.related}
\input{sections/8.conclusion}

\section*{Acknowledgement}
The authors would like to thank the editors and the anonymous reviewers for their insightful comments and suggestions, \syh{which have substantially improved the quality of this work.}
Guang Yang is supported by the Postdoctoral Fellowship Program of CPSF under Grant Number GZC20260902.

	


\bibliography{reference}
\bibliographystyle{elsarticle}
\end{sloppypar}
\end{document}

%% file: sections/1.intro.tex
\section{Introduction}
\label{sec:intro}

Recent advances in Large Language Models (LLMs)~\cite{yang2026less} for general code generation~\cite{yang2025code} have motivated researchers to apply them to Verilog code generation~\cite{liu2024rtlcoder}.
\syh{Verilog is the predominant Hardware Description Language (HDL) for chip design~\cite{thakur2024verigen}. The efficiency and quality of Verilog implementations strongly affect the chip development cycle and engineering reliability.}
Thus, enabling LLMs to generate correct and high-quality Verilog code is of considerable practical value.

For Verilog code generation, a critical challenge is the lack of high-quality, well-curated domain-specific datasets.
To alleviate this issue, \syh{existing efforts~\cite{liu2024rtlcoder} construct Verilog datasets by crawling open-source GitHub repositories or by synthesizing samples with LLMs~\cite{li2023starcoder}.}
However, such construction methods inevitably introduce substantial noise and information redundancy~\cite{kandpal2022deduplicating}, severely impairing model training efficiency and generalization capability.

A natural solution to this problem is data filtering and selection.
Several preliminary efforts have been made in this direction: Liu et al.~\cite{liu2024rtlcoder} introduce syntax-level filtering to eliminate erroneous samples, and Wei et al.~\cite{wei2025vericoder} employ LLM-generated test cases with compilation-simulation verification to validate functional correctness.
While these efforts represent meaningful progress, they address only isolated aspects of data quality and lack a systematic data selection framework that jointly considers quality, diversity, and domain-specific characteristics. 
To illustrate, our empirical analysis reveals that training on the full RTLCoder-27k dataset yields average Pass@1 scores substantially lower than training on a carefully selected 20\%--25\% subset (e.g., 37.20\% vs.\ 44.08\% for Qwen2.5-Coder 7B), confirming that indiscriminate data usage introduces noise that harms model performance.
Specifically, when applied to the Verilog domain, existing methods still exhibit two key limitations.



\textbf{Limitation 1: Lack of domain-aware diversity modeling.}
Current data selection approaches~\cite{liu2024rtlcoder} focus predominantly on per-sample quality assessment without considering the global distribution of the training set~\cite{wei2025vericoder}.
As a consequence, even when individual samples satisfy quality criteria, the resulting dataset tends to be concentrated on a narrow range of hardware design scenarios or module types, leading to overall homogenization that restricts model generalization.
Addressing this diversity gap requires capturing the structural semantics specific to hardware designs.
Unlike general-purpose programming languages, Verilog embeds domain-specific characteristics~\cite{loow2025simulation} such as timing logic and parallel execution, and its functional description relies on hardware-specific representations, including Netlists~\cite{wolf2013yosys}.
However, existing general-purpose diversity-oriented selection methods~\cite{yu2024diversify} rely solely on textual semantic embeddings for clustering and sampling, and cannot capture these structural and behavioral properties~\cite{renduchintala2024smart}.
Meanwhile, existing Verilog-specific methods~\cite{liu2024rtlcoder} remain limited to syntax-level or compilation-level checking and do not model the deeper structural semantics of hardware designs either~\cite{wei2025vericoder}.
As a result, there is currently no method capable of performing diversity-aware data selection grounded in Verilog's domain-specific structural features.

\textbf{Limitation 2: Lack of a unified framework.}
Beyond the individual shortcomings described above, no existing framework integrates multi-granularity quality selection and domain-aware diversity optimization into a single end-to-end pipeline.
Quality-oriented methods~\cite{liu2024rtlcoder} perform one-off syntax or functional verification in isolation, easily leading to redundant sampling of high-quality but homogeneous samples~\cite{wei2025vericoder}.
Diversity-oriented methods~\cite{yu2024diversify}, on the other hand, optimize distributional coverage without enforcing quality constraints, risking the inclusion of noisy or incorrect samples~\cite{renduchintala2024smart}.
Existing studies thus either discard diversity requirements for strict quality control, or sacrifice sample quality for distributional coverage, resulting in a fundamental inability to balance the two core objectives.
This fragmentation severely limits the efficiency and effectiveness of LLM fine-tuning for hardware design tasks.

To address these limitations, we propose {\tool}, the first systematic quality-and-diversity-aware data selection framework for Verilog code generation.
To ensure \textbf{data quality}, {\tool} employs a multi-granularity selection pipeline: it first verifies functional correctness through automated testbench generation and simulation~\cite{wei2025vericoder}, and then assesses instruction-solution alignment via Instruction-Following Difficulty (IFD) scoring~\cite{li2024quantity}, progressively eliminating noisy samples from behavioral to semantic levels.
To ensure \textbf{data diversity}, {\tool} introduces a Verilog-specific multi-level structural representation, including AST, CFG, and Netlist features with a total of 109 dimensions (57 dimensions for AST, 28 dimensions for CFG, and 24 dimensions for Netlist). Then {\tool} fuses the structural features with textual semantic embeddings and performs clustering-based diversity modeling. 
Building on this, a proportional \textbf{adaptive sampling} strategy allocates per-cluster quotas guided by IFD ranks, selecting a compact subset that jointly preserves structural diversity and prioritizes high-quality samples.

\syh{To evaluate {\tool}, we conducted experiments on the RTLCoder-27k dataset~\cite{liu2024rtlcoder}.
This corpus is constructed by expanding keywords and generating code variants with GPT-4o.}
\syh{We evaluated three models of similar scale (Qwen2.5-Coder 7B, SeedCoder 8B, and CodeLlama 7B) on three benchmarks (VerilogEval-v2~\cite{liu2023verilogeval}, ResBench~\cite{guo2025resbench}, and RTLLM~\cite{lu2024rtllm}).}
Using only 20\% to 25\% of the training data, {\tool} surpasses full-dataset training and state-of-the-art baselines, achieving Performance Retention Rates over 118\% across all three models.
Moreover, ablation studies confirm the contribution of each component, while t-SNE visualization demonstrates that the selected subset retains the distributional diversity of the full dataset.

In summary, this paper makes the following contributions:

\begin{itemize}
    \item First Verilog data selection framework with provable quality-diversity bounds.
    \item A multi-granularity quality pipeline from correctness to instruction alignment.
    \item AST/CFG/Netlist features fused with text for diversity-aware clustering.
\end{itemize}

To facilitate future reproduction and follow-up work, we release source code, models, and datasets at \url{https://github.com/syhstudy/VeriSelecter} (accessed 5 September 2026).

%% file: sections/2.preliminaries.tex
\section{Preliminaries}
\label{sec:preliminaries}

In this section, we introduce the background of LLM-driven Verilog code generation, the structural features of Verilog code, and formally define the dataset selection problem addressed in this work.

\subsection{LLM-Driven Verilog Code Generation}
Given a natural language specification $x_i$ describing a hardware design requirement, the Verilog code generation model $M$ aims to autoregressively generate the corresponding Verilog code $y_i$. This process is parameterized by $\theta$ and can be expressed as:
\begin{equation}
P_{\theta}(y_i|x_i)=\prod_{k=1}^{n}P_{\theta}(y_{i,k}|x_i,y_{i,1:k-1}) 
\end{equation}
where $y_{i,1:k-1}$ represents the previously generated tokens before the $k$-th token of $y_i$, and $n$ denotes the total number of tokens in the target sequence $y_i$. 

To adapt LLMs to the Verilog code generation task, we employ Supervised Fine-Tuning (SFT)~\cite{ouyang2022training} combined with Low-Rank Adaptation (LoRA)~\cite{hu2022lora}. SFT minimizes the negative log-likelihood loss over a training dataset $\mathcal{D}$:
\begin{equation}
\mathcal{L}_{\text{SFT}} = E_{(x,y) \sim \mathcal{D}} \left[ -\log P_{\theta}(y \mid x) \right]
\label{eq:sft_loss}
\end{equation}
LoRA freezes the original weight matrix $W \in \mathbb{R}^{d \times k}$ and introduces a low-rank update $\Delta W = \frac{\alpha}{r} \cdot BA$, where $B \in \mathbb{R}^{d \times r}$ and $A \in \mathbb{R}^{r \times k}$ are trainable low-rank matrices with $r \ll \min(d, k)$, $\alpha$ is the scaling factor, and $r$ is the rank. Only $A$ and $B$ are optimized during training. Further details of the SFT and LoRA configurations are provided in Section~\ref{sec:setup}.

\subsection{Structural Features of Verilog Code}
Unlike general-purpose programming languages, Verilog code can be characterized from multiple structural levels, including AST, CFG, and Netlist, which simultaneously reflect its software semantics and hardware design nature.
These structural perspectives can be quantified into numerical feature vectors that characterize different aspects of Verilog code, providing a foundation for the data selection problem defined below.

\begin{figure}[t]
    \centering
    \includegraphics[width=\linewidth]{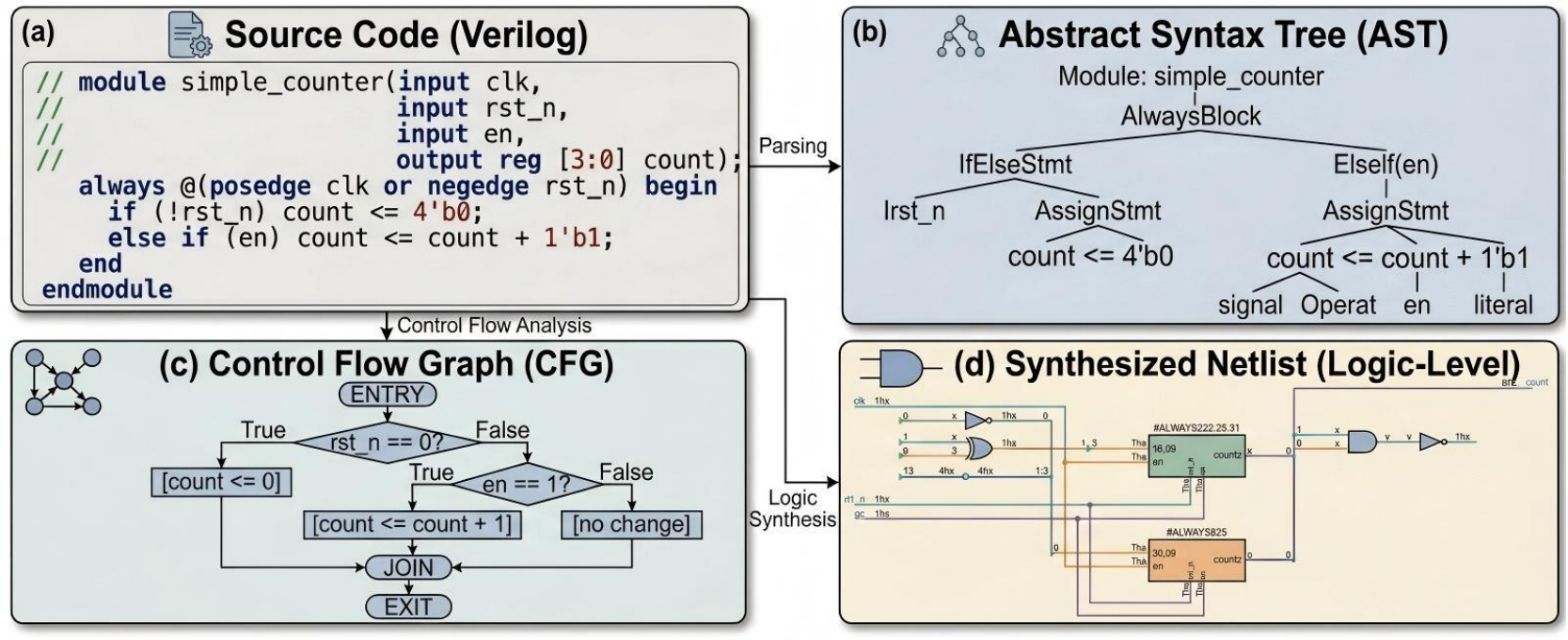}
    \caption{An illustrative Verilog module and its three structural representations: (a) source code, (b) Abstract Syntax Tree (AST), (c) Control Flow Graph (CFG) of the \texttt{always} block, and (d) synthesized netlist.}
    \label{fig:verilog_features}
\end{figure}

As shown in Figure~\ref{fig:verilog_features}, a Verilog module can be analyzed from three complementary structural perspectives, corresponding to panels (b)--(d):

\textbf{AST Features.} Abstract Syntax Tree (AST)~\cite{alon2018general} features capture the static syntactic structure of code. By analyzing node types in the tree (e.g., module declarations, continuous assignments, procedural blocks, operators, operands, etc.), tree depth, width, and child node distribution, AST features can quantify the syntactic complexity and structural normativity of code, facilitating the identification of redundant or anomalous syntactic patterns.
    
\textbf{CFG Features.} Control Flow Graph (CFG)~\cite{allen1970control} features characterize the dynamic execution logic of behavioral-level code (e.g., always blocks), where nodes correspond to basic blocks (sets of sequentially executed statements) and edges represent control flow transfers between blocks (e.g., conditional branches, loops). Analyzing the scale, path complexity, loop structures, and reachability of CFGs is critical for verifying the completeness of code logic and detecting unreachable code and logical conflicts.
    
\textbf{Netlist Features.} Netlist features~\cite{wang2022functionality} directly describe the physical and logical structures of hardware, including hierarchical relationships of modules (e.g., instantiation connections between top-level modules and sub-modules), port directions and types (input, output, inout), combinational and sequential logic connections between signals, and timing constraints (e.g., setup/hold time). Such information serves as the fundamental basis for evaluating the functional correctness, timing compliance, and structural integrity of designs.


\subsection{Problem Formulation}
Based on the structural features introduced above, we now formally define the quality-diversity-aware dataset selection problem for Verilog code generation.

Let $\mathcal{D} = \{d_1, d_2, \dots, d_N\}$ denote the original Verilog training dataset, where each sample $d_i = \langle \text{Instruction}_i, \text{Code}_i \rangle$ consists of a natural language instruction and its canonical Verilog solution. For each sample, we extract a textual feature vector $\mathbf{f}_{\text{Text}}(d_i)$ encoding the semantic information of the instruction, and a multi-level structural feature vector:
\[
\mathbf{f}_{\text{Struct}}(d_i) = [\mathbf{f}_{\text{AST}}(d_i),\; \mathbf{f}_{\text{CFG}}(d_i),\; \mathbf{f}_{\text{Netlist}}(d_i)]
\]
These are fused into a unified representation $\mathbf{e}(d_i)$ that captures both the semantic intent and the structural characteristics of each sample.

The objective is to design a selection function
\[
\mathcal{S}: \mathcal{D} \to \mathcal{D}', \quad \text{where} \quad \mathcal{S}(\mathcal{D}) = \mathcal{D}' \subseteq \mathcal{D}, \quad |\mathcal{D}'| \ll |\mathcal{D}|
\]
guided by the fused representations $\{\mathbf{e}(d_i)\}_{i=1}^{N}$, such that the selected subset $\mathcal{D}'$ simultaneously satisfies the following constraints:
\begin{itemize}
    \item \textbf{Quality constraint}: Samples in $\mathcal{D}'$ shall exhibit high functional correctness and strong instruction-solution alignment;
    \item \textbf{Diversity constraint}: $\mathcal{D}'$ shall cover diverse hardware design scenarios (e.g., combinational logic, sequential logic) and module types (e.g., adders, registers, etc.);
    \item \textbf{Efficiency constraint}: The selection process shall have low computational cost, and the performance (e.g., Pass@k) of models trained on $\mathcal{D}'$ shall be comparable to that of models trained on the full dataset $\mathcal{D}$.
\end{itemize}


\subsection{Theoretical Analysis}
\label{sec:theory}

Building on the problem formulation above, we formalize the selection objective that {\tool} approximates and establish a provable guarantee on its diversity preservation.

\subsubsection{Unified Optimization Formulation.}
The quality-aware selection stage constructs a feasible set $\mathcal{D}_{\text{quality}}$ by applying hard constraints. Let $Q_i = \text{Instruction}_i$ and $A_i = \text{Code}_i$ denote the instruction and code of sample $d_i$, respectively. The feasible set is defined as:
\begin{equation}
\mathcal{D}_{\text{quality}} = \{ d_i \in \mathcal{D} \mid \text{Sim}(A_i,\, t_i) = \texttt{True} \;\wedge\; \text{IFD}(Q_i,\, A_i) < 1 \}
\label{eq:feasible}
\end{equation}
This set excludes all functionally incorrect samples and all instruction-solution pairs that lack semantic alignment (see Section~\ref{sec:quality} for details).

Given the K-Means partition $\{\mathcal{C}_k\}_{k=1}^K$ of $\mathcal{D}_{\text{quality}}$ in the fused feature space and a sampling budget $N_{\text{target}}$, define the cluster distribution $\hat{\pi}_k = |\mathcal{C}_k| / |\mathcal{D}_{\text{quality}}|$ and the selected distribution $\hat{\pi}'_k = n_k / N_{\text{target}}$. The selection objective can be written as:
\begin{equation}
\max_{\mathcal{D}' \subseteq \mathcal{D}_{\text{quality}},\; |\mathcal{D}'| \le N_{\text{target}}}
  \sum_{d_i \in \mathcal{D}'} \psi(d_i)
  \quad \text{s.t.} \quad
  \|\hat{\pi}' - \hat{\pi}\|_1 \le \epsilon
\label{eq:objective}
\end{equation}
where $\psi(d_i) = \text{IFD}(Q_i, A_i)$ serves as the learning signal that prioritizes moderately challenging yet well-aligned samples.

{\tool} approximates this objective via sequential decomposition. Quality filtering constructs $\mathcal{D}_{\text{quality}}$. Proportional quota allocation enforces the distribution constraint. Intra-cluster IFD ranking maximizes $\sum \psi$. The three stages collectively address the quality, diversity, and efficiency constraints defined in Section~\ref{sec:preliminaries}.

\subsubsection{Distribution Preservation Guarantee.}
We now show that the proportional sampling strategy preserves the cluster distribution of $\mathcal{D}_{\text{quality}}$ with a bounded deviation.

\begin{proposition}[Distribution Preservation]
\label{prop:dist}
Let $\mathcal{D}_{\text{quality}}$ be partitioned into $K$ clusters, and let the per-cluster quota be $n_k = \mathrm{round}(|\mathcal{C}_k| / |\mathcal{D}_{\text{quality}}| \cdot N_{\text{target}})$. Then the selected subset satisfies:
\begin{equation}
\|\hat{\pi}' - \hat{\pi}\|_1 \le \frac{2K}{N_{\text{target}}}
\label{eq:dist_bound}
\end{equation}
\end{proposition}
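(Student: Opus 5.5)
The plan is to reduce the $\ell_1$ bound to a per-cluster rounding error and then sum over the $K$ clusters. Write $x_k = \hat{\pi}_k N_{\text{target}} = |\mathcal{C}_k| N_{\text{target}} / |\mathcal{D}_{\text{quality}}|$ for the ideal real-valued quota, so that $n_k = \mathrm{round}(x_k)$ and $|n_k - x_k| \le \tfrac{1}{2}$ for every $k$. With the selected distribution defined as in Section~\ref{sec:theory}, namely $\hat{\pi}'_k = n_k / N_{\text{target}}$, each coordinate satisfies
\[
|\hat{\pi}'_k - \hat{\pi}_k| = \frac{|n_k - x_k|}{N_{\text{target}}} \le \frac{1}{2N_{\text{target}}}.
\]
Summing over $k=1,\dots,K$ gives $\|\hat{\pi}' - \hat{\pi}\|_1 \le K/(2N_{\text{target}})$, which is well within the claimed $2K/N_{\text{target}}$.

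Before summing, I will check that the quotas are actually realizable, so that $\mathcal{D}'$ really contains $n_k$ items from $\mathcal{C}_k$ and the selected distribution is the one computed above. Assuming $N_{\text{target}} \le |\mathcal{D}_{\text{quality}}|$, we have $x_k \le |\mathcal{C}_k|$. Since $|\mathcal{C}_k|$ is an integer, rounding cannot push $n_k$ above it, so $n_k \le |\mathcal{C}_k|$ and intra-cluster IFD ranking can always fill each quota.

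The main obstacle is robustness to how $\hat{\pi}'$ is normalized and to any post-hoc quota adjustment in the actual sampler. I will use the slack between $K/(2N_{\text{target}})$ and $2K/N_{\text{target}}$ to cover two variants.
\begin{itemize}
\item \emph{Normalizing by the realized size.} If one normalizes by $M = \sum_k n_k$ instead of $N_{\text{target}}$, then $|M - N_{\text{target}}| \le K/2$. A triangle inequality through $n_k/N_{\text{target}}$ gives
\[
\sum_k \left| \frac{n_k}{M} - \frac{n_k}{N_{\text{target}}} \right| = \frac{|N_{\text{target}} - M|}{N_{\text{target}}} \le \frac{K}{2N_{\text{target}}},
\]
so the total is at most $K/N_{\text{target}}$.
\item \emph{Correcting the budget mismatch.} If the sampler redistributes the leftover $|M - N_{\text{target}}|$ slots to hit the budget exactly, each cluster changes by at most one additional unit. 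This gives $|n_k - x_k| \le 3/2$ per cluster, and still yields a total bound of at most $2K/N_{\text{target}}$.
\end{itemize}
Handling these variants carefully, and stating the assumption $N_{\text{target}} \le |\mathcal{D}_{\text{quality}}|$, is where the care goes; the core rounding argument itself is routine.
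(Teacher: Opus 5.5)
Your argument is correct and follows the same decomposition as the paper's proof: bound the per-cluster rounding error, sum over the $K$ clusters, and reserve the remaining slack for enforcing the exact budget $\sum_k n_k = N_{\text{target}}$. Your version is sharper and more explicit than the paper's, which uses the looser bound $|n_k-\tilde{n}_k|\le 1$ to get $K/N_{\text{target}}$ and then simply asserts an extra correction of at most $K/N_{\text{target}}$, without checking realizability ($n_k\le|\mathcal{C}_k|$) or how $\hat{\pi}'$ is normalized.
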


\begin{proof}
The ideal continuous quota $\tilde{n}_k = |\mathcal{C}_k| \cdot N_{\text{target}} / |\mathcal{D}_{\text{quality}}|$ satisfies $\tilde{n}_k / N_{\text{target}} = \hat{\pi}_k$ exactly. The rounding operation introduces a per-cluster error $|n_k - \tilde{n}_k| \le 1$, which translates into a distributional error of $|\hat{\pi}'_k - \hat{\pi}_k| \le 1 / N_{\text{target}}$ per cluster. Summing over all $K$ clusters yields $\|\hat{\pi}' - \hat{\pi}\|_1 \le K / N_{\text{target}}$. An additional correction of at most $K / N_{\text{target}}$ may be needed to enforce the exact budget $\sum_k n_k = N_{\text{target}}$. This gives the stated bound $2K / N_{\text{target}}$.
\end{proof}

In our setting, $K = 100$ and $N_{\text{target}} \approx 5{,}500$ (20\% of the original dataset). The bound evaluates to approximately 0.036, indicating less than 4\% distributional deviation in $L_1$ norm. This provides a formal guarantee that proportional sampling preserves the structural diversity of $\mathcal{D}_{\text{quality}}$.

\begin{remark}[Parameter-Free IFD Threshold]
\label{rmk:ifd}
By definition, $\mathrm{IFD}(Q, A) < 1$ holds if and only if $\mathcal{L}_\theta(A \mid Q) < \mathcal{L}_\theta(A)$. That is, the instruction strictly reduces the generation loss of the solution under the prior model. The threshold $\tau = 1$ is derived directly from this definition and requires no validation-set tuning.
\end{remark}

%% file: sections/3.method.tex
\section{Methodology}
\label{sec:method}

\subsection{Overview}
As illustrated in Figure~\ref{fig:framework}, {\tool} consists of three stages: (1) \textbf{Quality-Aware Selection}, which progressively eliminates noisy samples via testbench verification and IFD-based alignment scoring; (2) \textbf{Diversity-Aware Selection}, which fuses textual semantics with Verilog-specific structural features (AST, CFG, and Netlist) to partition samples into clusters capturing distinct hardware design patterns; and (3) \textbf{Adaptive Sampling}, which applies intra-cluster quality-ranked proportional sampling to produce the final training dataset $\mathcal{D}'$.
Notably, the core novelty lies in the Verilog-specific structural knowledge representation and its coupling with quality signals, rather than in the clustering algorithm itself.

\begin{figure*}[t!]
    \centering
    \includegraphics[width=\textwidth]{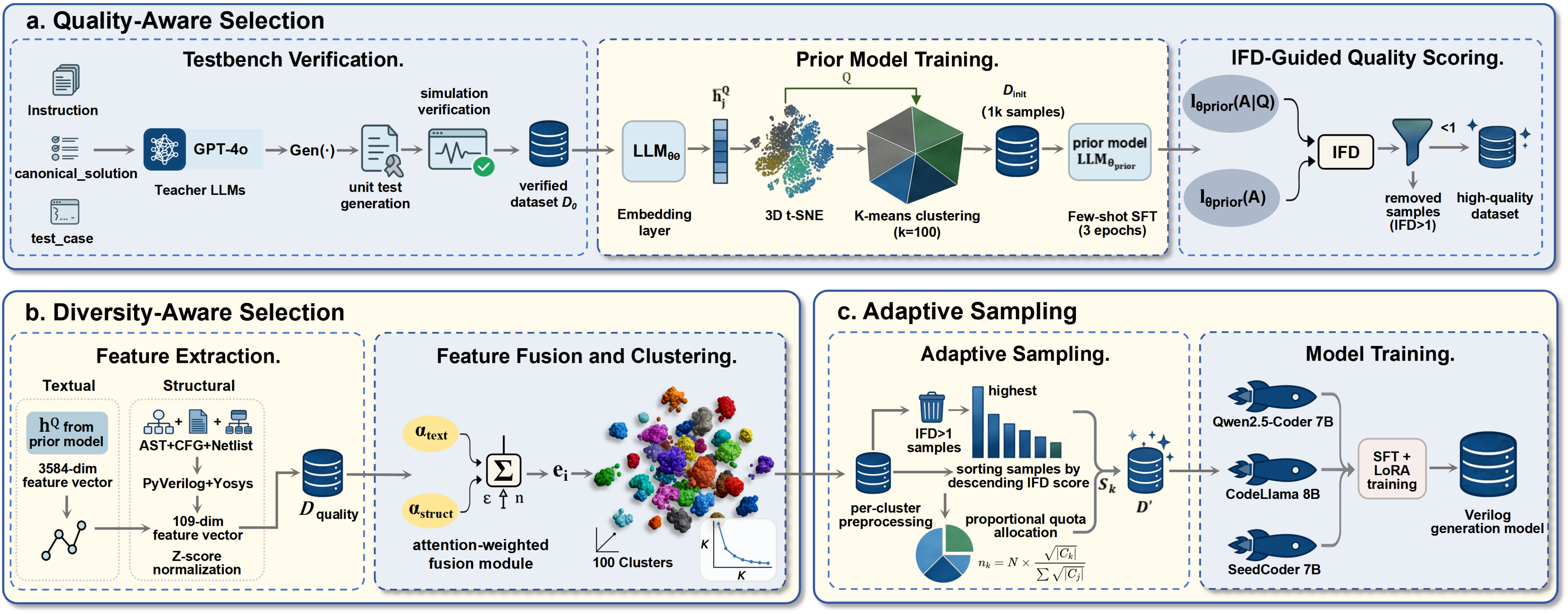}
    \caption{Framework and workflow of {\tool}.}
    \label{fig:framework}
\end{figure*}

\subsection{Quality-Aware Selection}
\label{sec:quality}
The quality-aware selection stage aims to remove samples that are functionally incorrect or exhibit misalignment between instructions and solutions. It consists of three steps: testbench verification, prior model training, and IFD-guided scoring.
To ensure data quality, we design a multi-granularity quality selection pipeline at this stage.

\subsubsection{Testbench Verification.}
Given the original dataset $\mathcal{D} = \{d_i\}_{i=1}^N$ where each sample $d_i = \langle \text{Instruction}_i, \text{Code}_i \rangle$, the first step eliminates functionally incorrect samples through automated testbench generation and simulation.

For each sample $d_i$, we employ GPT-4o to generate a testbench based on its instruction and code~\cite{wei2025vericoder}:
\begin{equation}
t_i = \text{Gen}(\text{GPT-4o},\; \text{Instruction}_i,\; \text{Code}_i)
\end{equation}

Each generated testbench $t_i$ is then executed against $\text{Code}_i$ using a standardized Verilog simulation toolchain. A sample is retained only if its code passes the compilation and simulation verification:
\begin{equation}
\mathcal{D}_0 = \{d_i \in \mathcal{D} \mid \text{Sim}(\text{Code}_i,\; t_i) = \texttt{True}\}
\end{equation}

\subsubsection{Prior Model Training.}
\label{sec:prior_model}
To enable efficient IFD scoring without relying on LLMs, we train a lightweight prior model on a small diverse subset. 
This design is motivated by the finding that IFD scores from smaller models are highly consistent with those from stronger models~\cite{li2024superfiltering}, making it possible to perform reliable quality assessment at low computational cost.

Starting from the base model $\text{LLM}_{\theta_0}$, we extract instruction embeddings for all samples in $\mathcal{D}_0$ by mean-pooling the last hidden states:
\begin{equation}
\boldsymbol{h}_i^{Q(0)} = \frac{1}{m}\sum_{j=1}^m \boldsymbol{h}_{i,j}^Q
\end{equation}
where $[\boldsymbol{h}_{i,1}^Q, \dots, \boldsymbol{h}_{i,m}^Q] = \text{LLM}_{\theta_0}(\text{Instruction}_i)$. We then apply K-Means clustering on $\{\boldsymbol{h}_i^{Q(0)}\}$ to partition $\mathcal{D}_0$ into 100 clusters and randomly sample 10 samples per cluster, yielding a 1{,}000-sample initial set $\mathcal{D}_{\text{init}}$. Inspired by LIMA~\cite{zhou2023lima}, we fine-tune $\text{LLM}_{\theta_0}$ on $\mathcal{D}_{\text{init}}$ for 3 epochs with autoregressive cross-entropy loss, obtaining the prior model $\text{LLM}_{\theta_{\text{prior}}}$.

Note that the clustering here serves solely to ensure diversity in the prior model's training data. It is distinct from the diversity-aware clustering in Section~\ref{sec:diversity}, which operates on richer fused features for the final selection.

\subsubsection{IFD-Guided Quality Scoring.}
\label{sec:ifd}
After functional correctness filtering, we further assess the semantic alignment between each instruction-solution pair. Specifically, we adopt the Instruction-Following Difficulty (IFD) metric~\cite{li2024quantity} computed by the prior model $\text{LLM}_{\theta_{\text{prior}}}$.
Let $Q$ and $A$ denote the instruction and solution of a given sample (i.e., $Q = \text{Instruction}_i$ and $A = \text{Code}_i$), $w_k^A$ the $k$-th token of $A$, and $N_A$ the total number of tokens in $A$. The IFD score is defined as the ratio of two conditional losses:

\textbf{With instruction context:}
\begin{equation}
\mathcal{L}_{\theta_{\text{prior}}}(A|Q) = -\frac{1}{N_A}\sum_{k=1}^{N_A}\log P(w_k^A \mid Q, w_1^A, \dots, w_{k-1}^A;\; \theta_{\text{prior}})
\end{equation}

\textbf{Without instruction context:}
\begin{equation}
\mathcal{L}_{\theta_{\text{prior}}}(A) = -\frac{1}{N_A}\sum_{k=1}^{N_A}\log P(w_k^A \mid w_1^A, \dots, w_{k-1}^A;\; \theta_{\text{prior}})
\end{equation}

The IFD score is then computed as:
\begin{equation}
\text{IFD}(Q, A) = \frac{\mathcal{L}_{\theta_{\text{prior}}}(A|Q)}{\mathcal{L}_{\theta_{\text{prior}}}(A)}
\end{equation}

Intuitively, $\text{IFD} = 1$ constitutes a natural decision boundary derived from the definition of the metric itself~\cite{li2024quantity}: when $\text{IFD} < 1$, the conditional loss $\mathcal{L}(A|Q) < \mathcal{L}(A)$, indicating that the instruction reduces the difficulty of generating the solution and thus provides positive guidance; when $\text{IFD} = 1$, the instruction contributes no additional information beyond what the model can infer from the solution alone; when $\text{IFD} > 1$, the instruction increases generation difficulty, suggesting semantic misalignment or even interference between the instruction and the solution.

Unlike thresholds that require empirical tuning, $\text{IFD} = 1$ is a \textit{parameter-free} boundary that naturally separates well-aligned from misaligned instruction--solution pairs. We therefore discard all samples with $\text{IFD} \geq 1$, obtaining the quality-filtered dataset:
\begin{equation}
\mathcal{D}_{\text{quality}} = \{d_i \in \mathcal{D}_0 \mid \text{IFD}(Q_i, A_i) < 1\}
\end{equation}

\subsection{Diversity-Aware Selection}
\label{sec:diversity}
After quality filtering, the remaining challenge is to ensure that the selected subset covers diverse hardware design scenarios. To this end, we extract dual-dimensional representations, namely textual semantics and structural features, for each sample, fuse them into a unified representation, and partition samples into clusters that capture distinct Verilog design patterns.
Thus, we innovatively incorporate Verilog-specific hardware features (especially Netlist features) to enhance the accuracy of clustering.

\subsubsection{Feature Extraction.}

\textbf{Textual Features.}
For each sample in $\mathcal{D}_{\text{quality}}$, we extract instruction embeddings using the prior model $\text{LLM}_{\theta_{\text{prior}}}$ by mean-pooling its last hidden states, yielding $\boldsymbol{h}_i^Q \in \mathbb{R}^{d_t}$. These embeddings encode richer semantic information than those from the base model, as the prior model has acquired basic Verilog instruction-following capability through fine-tuning.

\textbf{Structural Features.}
Unlike general-purpose code, Verilog code can be characterized at multiple structural levels that reflect both its software semantics and hardware design nature. 
We exploit this property by extracting features from three complementary perspectives: AST features capture syntactic design patterns (e.g., combinational vs.\ sequential logic constructs), CFG features characterize behavioral complexity (e.g., state machine depth and branching density), and Netlist features reflect physical implementation characteristics (e.g., module hierarchy and signal connectivity). 
Together, they provide a domain-specific structural fingerprint that textual embeddings alone cannot express.

Concretely, we employ PyVerilog to parse each $\text{Code}_i$ and extract AST and CFG features, and Yosys to synthesize the code and extract Netlist features. The resulting structural feature vector is:
\begin{equation}
\boldsymbol{f}_i^{\text{struct}} = [\boldsymbol{f}_i^{\text{AST}};\; \boldsymbol{f}_i^{\text{CFG}};\; \boldsymbol{f}_i^{\text{Netlist}}] \in \mathbb{R}^{d_s}
\end{equation}
where $d_s = 109$ (57 dimensions for AST, 28 dimensions for CFG, and 24 dimensions for Netlist) in our implementation.
Table~\ref{tab:feature_dimension} illustrates the composition of the feature dimensions, and additional details can be \syh{found on the project homepage~\footnote{\url{https://github.com/syhstudy/VeriSelecter/blob/main/feature_section.md} (accessed 5 September 2026)}.}

\begin{table}[htbp]
  \centering
  \caption{Feature Dimension Composition Details.}
  \label{tab:feature_dimension}
  \resizebox{0.45\textwidth}{!}{
  \begin{tabular}{ccl}
  \toprule
  \textbf{Feature} & \textbf{Dimension} & \textbf{Details} \\
  \midrule
  \multirow{2}{*}{AST} & \multirow{2}{*}{57} & Node type (54), depth parameters (1), \\
                      &                     & total node count (1), node count per depth (1) \\
  \multirow{2}{*}{CFG} & \multirow{2}{*}{28} & Basic features (10), label features (7), \\
                      &                     & kernel features (11) \\
  \multirow{2}{*}{Netlist} & \multirow{2}{*}{24} & Basic features (9), type features (6), \\
                           &                     & gate features (9) \\
  \bottomrule
  \end{tabular}}
\end{table}

\subsubsection{Feature Fusion and Clustering.}

\textbf{Standardization.}
To eliminate scale discrepancies, we apply Z-score normalization to textual features $\boldsymbol{h}_i^Q$ and structural features $\boldsymbol{f}_i^{\text{struct}}$ independently, yielding $\hat{\boldsymbol{h}}_i^Q$ and $\hat{\boldsymbol{f}}_i^{\text{struct}}$.

\textbf{Fusion.}
Since both feature modalities have been independently standardized to zero mean and unit variance via Z-score normalization, we directly concatenate them to form the fused representation:
\begin{equation}
\mathbf{e}_i = \hat{\boldsymbol{h}}_i^Q \;\oplus\; \hat{\boldsymbol{f}}_i^{\text{struct}}
\end{equation}
where $\oplus$ denotes vector concatenation. The prior Z-score normalization ensures that the two modalities contribute on a comparable scale, eliminating the need for additional fusion weights.

\textbf{Clustering.}
We apply K-Means on the fused representations $\{\mathbf{e}_i\}$ to partition $\mathcal{D}_{\text{quality}}$ into $K$ clusters (determined by the elbow method; $K=100$ in our experiments):
\begin{equation}
\min_{\{\mathcal{C}_k\}_{k=1}^K} \sum_{k=1}^K \sum_{i \in \mathcal{C}_k} \|\mathbf{e}_i - \boldsymbol{\mu}_k\|_2^2
\end{equation}
where $\boldsymbol{\mu}_k$ is the centroid of cluster $\mathcal{C}_k$. Each cluster captures a coherent set of Verilog design patterns, providing a structured foundation for diverse sampling.

\subsection{Adaptive Sampling}
\label{sec:sampling}
The final stage produces the selected dataset $\mathcal{D}'$ by performing quality-ranked proportional sampling within each cluster, achieving a balance between sample quality and scenario diversity.

\textbf{Intra-cluster Ranking.}
Within each cluster $\mathcal{C}_k$, samples are sorted in descending order by their IFD scores. The intuition is that samples with higher IFD scores (while still below 1) represent moderately challenging instruction-following tasks that provide stronger learning signals for the model~\cite{li2024quantity}, and are thus prioritized for selection.

\textbf{Proportional Sampling.}
Given a global sampling budget $N_{\text{target}}$, the quota for each cluster is allocated proportionally to its size:
\begin{equation}
n_k = \text{round}\left(\frac{|\mathcal{C}_k|}{|\mathcal{D}_{\text{quality}}|} \cdot N_{\text{target}}\right)
\end{equation}
where $|\mathcal{C}_k|$ is the number of samples in cluster $k$, and $|\mathcal{D}_{\text{quality}}|$ is the total number of quality-filtered samples. This ensures that the selected subset faithfully reflects the cluster distribution of $\mathcal{D}_{\text{quality}}$, preserving the natural proportion of different hardware design patterns.

The top-$n_k$ samples (by IFD ranking) from each cluster are selected, and the final dataset is assembled as:
\begin{equation}
\mathcal{D}' = \bigcup_{k=1}^K \text{Top}_{n_k}(\mathcal{C}_k)
\end{equation}
This dataset $\mathcal{D}'$ simultaneously satisfies the quality, diversity, and efficiency constraints defined in Section~\ref{sec:preliminaries}, and is used for subsequent model fine-tuning.

%% file: sections/4.setup.tex
\section{\syh{Experimental Setup}}
\label{sec:setup}

In this section, we present the research questions, experimental subjects, baselines, evaluation metrics, and implementation details.

\subsection{Research Questions}
To systematically evaluate the effectiveness, robustness, and design rationality of {\tool}, we propose the following three research questions (RQs).

RQ1: How \textbf{effective} is {\tool} compared to existing data selection methods in Verilog code generation tasks?

This RQ evaluates whether the joint quality-diversity selection strategy of {\tool} outperforms existing methods that either lack Verilog domain-specific features or fail to balance sample quality and data diversity.
We compare {\tool} against all baselines on three benchmarks using Pass@k metrics.

RQ2: How \textbf{sensitive} is {\tool} to the data selection rate?

This RQ investigates the performance of {\tool} under different data selection rates. 
We set multiple selection rates from 5\% to 25\% to analyze how performance changes under different data volume constraints, and to evaluate its adaptability in data-scarce scenarios.

\noindent\textbf{RQ3: What is the contribution of each component in {\tool}?}

This RQ validates the design rationality of {\tool} through ablation studies.
We systematically remove or replace core components, including testbench verification, IFD quality scoring, and structural feature fusion, to examine their individual contributions and synergistic effects.

\subsection{Experimental Subject}

\subsubsection{Datasets}

We adopt RTLCoder-27k~\cite{liu2024rtlcoder} as our base selection dataset. \syh{This dataset comprises 27,532 instruction-code pairs}, where each sample consists of a natural language description of a digital circuit design requirement and its corresponding Verilog implementation.
The dataset is generated via an automated pipeline and widely adopted for model fine-tuning: GPT-4o is leveraged to expand keywords and generate code variations, followed by syntax validation to filter invalid samples.

To evaluate the quality of the selected data, we adopt three widely used Verilog benchmarks:
\begin{itemize}
    \item \textbf{VerilogEval-v2}~\cite{liu2023verilogeval} consists of 156 problems with manually written specifications and standard testbenches;
    \item \textbf{RTLLM-v2}~\cite{lu2024rtllm} comprises 50 typical hardware module tasks, enabling evaluation of code correctness in real-world hardware design scenarios;
    \item \textbf{ResBench}~\cite{guo2025resbench} comprises 56 problems focusing on more complex and real-world-oriented hardware designs.    
\end{itemize}
All benchmarks provide golden testbenches for functional verification.

\subsubsection{Verilog code generation models}
To evaluate the cross-model generality of {\tool}, we select three mainstream open-source code models with distinct architectures and training paradigms:
\begin{itemize}
    \item \textbf{Qwen2.5-Coder 7B}~\cite{hui2024qwen2} is specifically optimized for code tasks with efficient inference capabilities.
    \item \textbf{CodeLlama 7B}~\cite{roziere2023code} is a classic general-purpose large code model serving as a representative baseline.
    \item \textbf{SeedCoder 8B}~\cite{seed2025seed} adopts a model-centric data pipeline that leverages LLMs to curate code pretraining data, minimizing reliance on hand-crafted filtering rules.
\end{itemize}
\yg{These models represent the three most widely adopted open-source model families for Verilog code generation according to a recent systematic literature review~\cite{yang2025large}.} These models are complementary in pretraining objectives and domain characteristics~\cite{wei2025vericoder, gao2024autovcoder, yubeaton2025verithoughts}, enabling rigorous verification of the generality of {\tool}.

\subsection{Baselines}
\label{sec:baseline}
We compare {\tool} against four state-of-the-art baselines spanning three categories: no-strategy sampling, quality-oriented selection, and domain-specific filtering.

\textbf{Random} constructs training subsets via uniform sampling from the original dataset without introducing any selection bias, serving as the lower-bound baseline.

\textbf{EL2N}~\cite{paul2021deep} is a loss-guided filtering method that uses gradient norms in early training stages to identify hard samples and prioritizes high-loss instances to accelerate model convergence.

\textbf{Cherry}~\cite{li2024quantity} is a self-guided data evaluation framework that enables models to automatically identify high-quality samples from large-scale datasets based on instruction-following difficulty.

\textbf{VeriCoder}~\cite{wei2025vericoder} is the only Verilog-specific baseline. It generates unit test cases and performs iterative simulation-based verification to remove functionally incorrect samples. Unlike {\tool}, VeriCoder focuses solely on functional correctness filtering and does not consider instruction-solution alignment quality or structural diversity.

\subsection{Evaluation Metrics}

\textbf{Pass@k.} We adopt the standard Pass@k metric~\cite{chen2021evaluating} as our primary evaluation metric. It assesses the probability that at least one out of $k$ generated candidates passes all unit tests, measuring the functional correctness of generated code. We report Pass@1, Pass@5, and Pass@10 in our experiments.

\textbf{Data Selection Rate (DSR)}. The DSR metric reflects the data utilization rate, defined as:
\begin{equation}
\nonumber
DSR = |\mathcal{D}'| / |\mathcal{D}|,
\end{equation}
where $|\mathcal{D}'|$ denotes the size of the selected subset and $|\mathcal{D}|$ denotes the size of the original dataset.

\textbf{Performance Retention Rate (PRR)}. The PRR metric quantifies the degree of retention of key information, defined as:
\begin{equation}
\nonumber
PRR = \mathcal{P}(\mathcal{D}') / \mathcal{P}(\mathcal{D}),
\end{equation}
where $\mathcal{P}(\mathcal{D}')$ is the model performance trained on the selected subset, and $\mathcal{P}(\mathcal{D})$ is the model performance trained on the full dataset.

\subsection{Implementation Details}



\begin{table}[t]
\centering
\caption{Hyper-parameters and their values.}
\label{tab:hyperparams}
\resizebox{0.5\textwidth}{!}{
\begin{tabular}{c|c||c|c}
\toprule
\textbf{Hyper-Parameter} & \textbf{Value}
& \textbf{Hyper-Parameter} & \textbf{Value} \\
\midrule
Cluster Numbers $K$ & 100
& Selection Rate & 10\%--25\% \\
Training Epochs & 3
& Learning Rate & 2e-4 \\
\bottomrule
\end{tabular}}
\end{table}

Table~\ref{tab:hyperparams} summarizes the key hyper-parameters used in our experiments. 
The number of clusters $K$ is determined by the elbow method.
For inference, we adopt nucleus sampling with temperature $\tau = 0.8$ and top-$p = 0.95$. For each test problem, we generate 20 candidate solutions to compute unbiased estimates of Pass@1, Pass@5, and Pass@10.
We emphasize that all models are trained for a fixed number of epochs (3 epochs) with fixed hyper-parameters, rather than being trained until a target accuracy is reached. This design prevents overfitting to the training set. Moreover, the three evaluation benchmarks (VerilogEval-v2, RTLLM, and ResBench) are entirely independent of the training dataset RTLCoder-27k, ensuring that no test data leaks into the training or hyper-parameter tuning process.
All experiments are conducted on a platform equipped with an NVIDIA RTX 3090 (24GB) GPU, an Intel Xeon E5-2666 (2.9GHz) CPU, and 128GB RAM. The software environment is built upon Python 3.9 and PyTorch 2.1. We utilize Vivado 2023.1 as the Verilog compilation and simulation tool, Scikit-learn for clustering, Hugging Face Transformers for model loading and fine-tuning, and PyVerilog for code structure parsing.

%% file: sections/5.results.tex
\section{Result Analysis}
\label{sec:result}

\subsection{RQ1: Effectiveness Comparison}

To evaluate the performance of {\tool}, we compare it with state-of-the-art baselines discussed in Section~\ref{sec:baseline}.
For a fair comparison, we use the most efficient data utilization for each method and apply the same evaluation metrics.
The results are shown in Table~\ref{tab:RQ1}. 
Note that VeriCoder applies only testbench-based functional correctness filtering, making it a natural reference point for evaluating the additional value of {\tool}'s IFD scoring, structural feature fusion, and diversity-aware clustering components.

\begin{table*}[htbp]
\centering
\caption{The results of {\tool} on different models. The $\uparrow$/$\downarrow$ values denote absolute improvements/declines (in percentage points) over the All baseline. \textbf{Bold} indicates the best result among methods with the same or lower data selection rate. \yg{VeriCoder$_R$ denotes random subsampling from VeriCoder's quality-filtered pool to match the target DSR, serving as a controlled baseline that isolates the contribution of {\tool}'s IFD scoring, structural features, and diversity-aware clustering beyond testbench filtering.}}
\label{tab:RQ1}
\resizebox{\textwidth}{!}{
\begin{tabular}{ccccccccc}
\multicolumn{9}{c}{\rule{0pt}{1.5em}\textit{The Results of {\tool} on \syh{Qwen2.5-Coder 7B}}} \\
    \midrule
\multirow{2}{*}{\textbf{Benchmark}} & \multirow{2}{*}{\textbf{Pass@k}} & \textbf{All} & \textbf{VeriCoder} & \yg{\textbf{VeriCoder$_R$}} & \textbf{Random} & \textbf{EL2N} & \textbf{Cherry} & \textbf{Ours} \\
     & & \textbf{100.00\%} & \textbf{46.55\%} & \yg{\textbf{20.00\%}} & \textbf{20.00\%} & \textbf{20.00\%} & \textbf{20.00\%} & \textbf{20.00\%} \\
    \midrule
    \multirow{3}{*}{ResBench} & Pass@1 & 41.07\% & 55.36\% & \yg{53.57\%} & 51.79\% & 57.14\% & 55.36\% & \textbf{{57.14\%}($\uparrow 16.07\%$)} \\
    & Pass@5 & 62.50\% & 69.64\% & \yg{66.07\%} & 64.29\% & 64.69\% & 66.07\% & \textbf{{69.64\%}($\uparrow 7.14\%$)} \\
    & Pass@10 & 69.64\% & 69.64\% & \yg{69.64\%} & 69.64\% & 68.57\% & 71.43\% & \textbf{{71.43\%}($\uparrow 1.79\%$)} \\
    \midrule
    \multirow{3}{*}{RTLLM} & Pass@1 & 34.00\% & 36.00\% & \yg{34.00\%} & 34.00\% & 36.00\% & 36.00\% & \textbf{{36.00\%}($\uparrow 2.00\%$)} \\
    & Pass@5 & 40.00\% & 44.00\% & \yg{42.00\%} & 40.00\% & 38.00\% & \textbf{{48.00\%}} & 46.00\% ($\uparrow 6.00\%$)\\
    & Pass@10 & 40.00\% & 46.00\% & \yg{44.00\%} & 44.00\% & 42.00\% & 48.00\% & \textbf{{50.00\%}($\uparrow 10.00\%$)} \\
    \midrule
    \multirow{3}{*}{VerilogEval} & Pass@1 & 36.54\% & \textbf{{39.74\%}} & \yg{37.18\%} & 36.54\% & 39.10\% & 39.10\% & 39.10\%($\uparrow 2.56\%$) \\
    & Pass@5 & 44.23\% & \textbf{{51.28\%}} & \yg{47.44\%} & 47.44\% & 47.44\% & 48.08\% & 48.72\% ($\uparrow 4.49\%$)\\
    & Pass@10 & 51.28\% & 51.28\% & \yg{50.64\%} & 50.64\% & \textbf{{53.85\%}} & 50.64\% & 51.28\% ($\uparrow 0.00\%$)\\
    \bottomrule
    \multicolumn{9}{c}{\rule{0pt}{1.5em}\textit{The Results of {\tool} on \syh{CodeLlama 7B}}} \\
    \midrule
    \multirow{3}{*}{ResBench} & Pass@1 & 35.71\% & 35.71\% & \yg{44.64\%} & 44.64\% & 42.86\% & 46.43\% & \textbf{46.43\%($\uparrow 10.72\%$)} \\
    & Pass@5 & 60.71\% & 51.79\% & \yg{60.71\%} & 58.93\% & 58.93\% & 64.29\% & \textbf{{67.86\%}($\uparrow 7.15\%$)} \\
    & Pass@10 & 55.36\% & 62.50\% & \yg{64.29\%} & 62.50\% & 66.07\% & 69.64\% & \textbf{{69.64\%}($\uparrow 14.28\%$)} \\
    \midrule
    \multirow{3}{*}{RTLLM} & Pass@1 & 26.00\% & \textbf{30.00\%} & \yg{24.00\%} & 22.00\% & 20.00\% & 24.00\% & \textbf{{30.00\%}($\uparrow 4.00\%$)} \\
    & Pass@5 & 34.00\% & 36.00\% & \yg{34.00\%} & 33.97\% & 36.00\% & 36.00\% & \textbf{{38.00\%}($\uparrow 4.00\%$)} \\
    & Pass@10 & \textbf{{44.00\%}} & 38.00\% & \yg{40.00\%} & 40.00\% & 40.00\% & 42.00\% & 42.00\% ($\downarrow 2.00\%$)\\
    \midrule
    \multirow{3}{*}{VerilogEval} & Pass@1 & 35.26\% & 32.05\% & \yg{35.26\%} & 33.97\% & 33.97\% & 37.18\% & \textbf{39.74\%($\uparrow 4.48\%$)} \\
    & Pass@5 & 42.95\% & 39.74\% & \yg{42.95\%} & 42.95\% & 43.59\% & 41.67\% & \textbf{{46.15\%}($\uparrow 3.20\%$)} \\
    & Pass@10 & 45.51\% & 42.31\% & \yg{46.15\%} & 44.87\% & \textbf{48.08\%} & 46.15\% & \textbf{{48.08\%}($\uparrow 2.57\%$)} \\
    \bottomrule
    \multicolumn{9}{c}{\rule{0pt}{1.5em}\textit{The Results of {\tool} on SeedCoder 8B}} \\
    \midrule
    \multirow{3}{*}{ResBench} & Pass@1 & 37.50\% & 51.79\% & \yg{53.57\%} & 51.79\% & 55.36\% & 51.79\% & \textbf{{62.50\%}($\uparrow 25.00\%$)} \\
    & Pass@5 & 57.14\% & 71.43\% & \yg{75.00\%} & 75.00\% & 69.64\% & 64.29\% & \textbf{{76.79\%}($\uparrow 19.65\%$)} \\
    & Pass@10 & 66.07\% & 82.14\% & \yg{80.36\%} & 78.57\% & 80.36\% & 73.21\% & \textbf{{85.71\%}($\uparrow 19.64\%$)} \\
    \midrule
    \multirow{3}{*}{RTLLM} & Pass@1 & 36.00\% & \textbf{{42.00\%}} & \yg{38.00\%} & 38.00\% & 38.00\% & 38.00\% & 36.00\% ($\uparrow 0.00\%$)\\
    & Pass@5 & 38.00\% & \textbf{{48.00\%}} & \yg{44.00\%} & 42.00\% & 44.00\% & 44.00\% & 46.00\% ($\uparrow 8.00\%$)\\
    & Pass@10 & 44.00\% & 48.00\% & \yg{50.00\%} & 48.00\% & 50.00\% & 50.00\% & \textbf{{54.00\%}($\uparrow 10.00\%$)} \\
    \midrule
    \multirow{3}{*}{VerilogEval} & Pass@1 & 39.74\% & 46.15\% & \yg{47.44\%} & 46.79\% & \textbf{50.00\%} & 43.29\% & 48.08\% ($\uparrow 8.34\%$)\\
    & Pass@5 & 50.64\% & 51.92\% & \yg{52.56\%} & 51.92\% & 52.56\% & 51.28\% & \textbf{{53.85\%}($\uparrow 3.21\%$)} \\
    & Pass@10 & 53.21\% & 55.13\% & \yg{55.77\%} & 55.77\% & 55.77\% & 56.41\% & \textbf{{56.41\%}($\uparrow 3.20\%$)} \\
\bottomrule
\end{tabular}}
\end{table*}

\subsubsection{Performance Comparison.}
{\tool} outperforms state-of-the-art baselines and full-dataset training in most cases across three mainstream code models (Qwen2.5-Coder 7B, CodeLlama 7B, SeedCoder 8B) and three Verilog benchmarks (VerilogEval-v2, ResBench, RTLLM).

For Qwen2.5-Coder 7B (20\% data), {\tool} achieves Pass@1 scores of 57.14\%, 36.00\%, and 39.10\% on ResBench, RTLLM, and VerilogEval, respectively, yielding an average Pass@1 of 44.08\%. This represents a relative improvement of 18.49\% over full-dataset training (average Pass@1 = 37.20\%) and 1.36\% over the best-performing baseline Cherry (43.49\%).
The most significant gain is observed on ResBench, where Pass@1 improves by 39.13\% relative to full-dataset training.
Notably, {\tool} achieves comparable performance to VeriCoder (which uses 46.55\% of data) with only 20\% data on VerilogEval, demonstrating exceptional data utilization efficiency.

For CodeLlama 7B (25\% data), {\tool} achieves Pass@1 scores of 46.43\%, 30.00\%, and 39.74\% on the three benchmarks, yielding an average Pass@1 of 38.72\%, with a relative improvement of 19.80\% over full-dataset training (32.32\%) and 7.95\% over Cherry (35.87\%).
On ResBench, its Pass@5 (67.86\%) improves by 11.78\% over full-dataset training (60.71\%) and 31.03\% over VeriCoder (51.79\%), demonstrating that {\tool} can significantly enhance the performance of general-purpose code models on Verilog tasks.

For SeedCoder 8B (25\% data), {\tool} exhibits the most significant improvement: Pass@1 scores of 62.50\%, 36.00\%, and 48.08\% yield an average Pass@1 of 48.86\%, improving by 29.43\% over full-dataset training (37.75\%) and 2.24\% over the best baseline EL2N (47.79\%).
On ResBench, its Pass@1 (62.50\%) achieves a 66.67\% relative increase over full-dataset training (37.50\%). On RTLLM, its Pass@10 (54.00\%) improves by 8.00\% over Cherry (50.00\%) and 22.73\% over full-dataset training (44.00\%).

Across all models, {\tool} achieves outstanding Performance Retention Rates (PRR, i.e., the ratio of subset-trained to full-dataset-trained performance): 118.49\% for Qwen2.5-Coder 7B (20\% data), 119.80\% for CodeLlama 7B (25\% data), and 129.43\% for SeedCoder 8B (25\% data) in average Pass@1. 

\yg{To further isolate the contribution of {\tool}'s IFD scoring, structural features, and diversity-aware clustering beyond testbench filtering, we introduce VeriCoder$_R$ as a controlled baseline that randomly subsamples from VeriCoder's quality-filtered pool to match the target DSR. Comparing VeriCoder$_R$ against {\tool} under the same data budget reveals a consistent performance gap: on Qwen2.5-Coder 7B, VeriCoder$_R$ achieves average Pass@1 of 41.58\% (53.57\%, 34.00\%, 37.18\%) compared to {\tool}'s 44.08\%, a 2.50 percentage point improvement attributable entirely to the subsequent pipeline stages. This gap persists across CodeLlama 7B (VeriCoder$_R$: 34.63\% vs.\ Ours: 38.72\%) and SeedCoder 8B (VeriCoder$_R$: 46.34\% vs.\ Ours: 48.86\%), confirming that quality filtering alone is insufficient and that the diversity-aware selection mechanism provides additional gains beyond what testbench verification can offer.}

\subsubsection{Distribution Characteristics}

\begin{figure}[htb]
	\centering
    \vspace{-1mm}
    \includegraphics[width=1.0\linewidth]{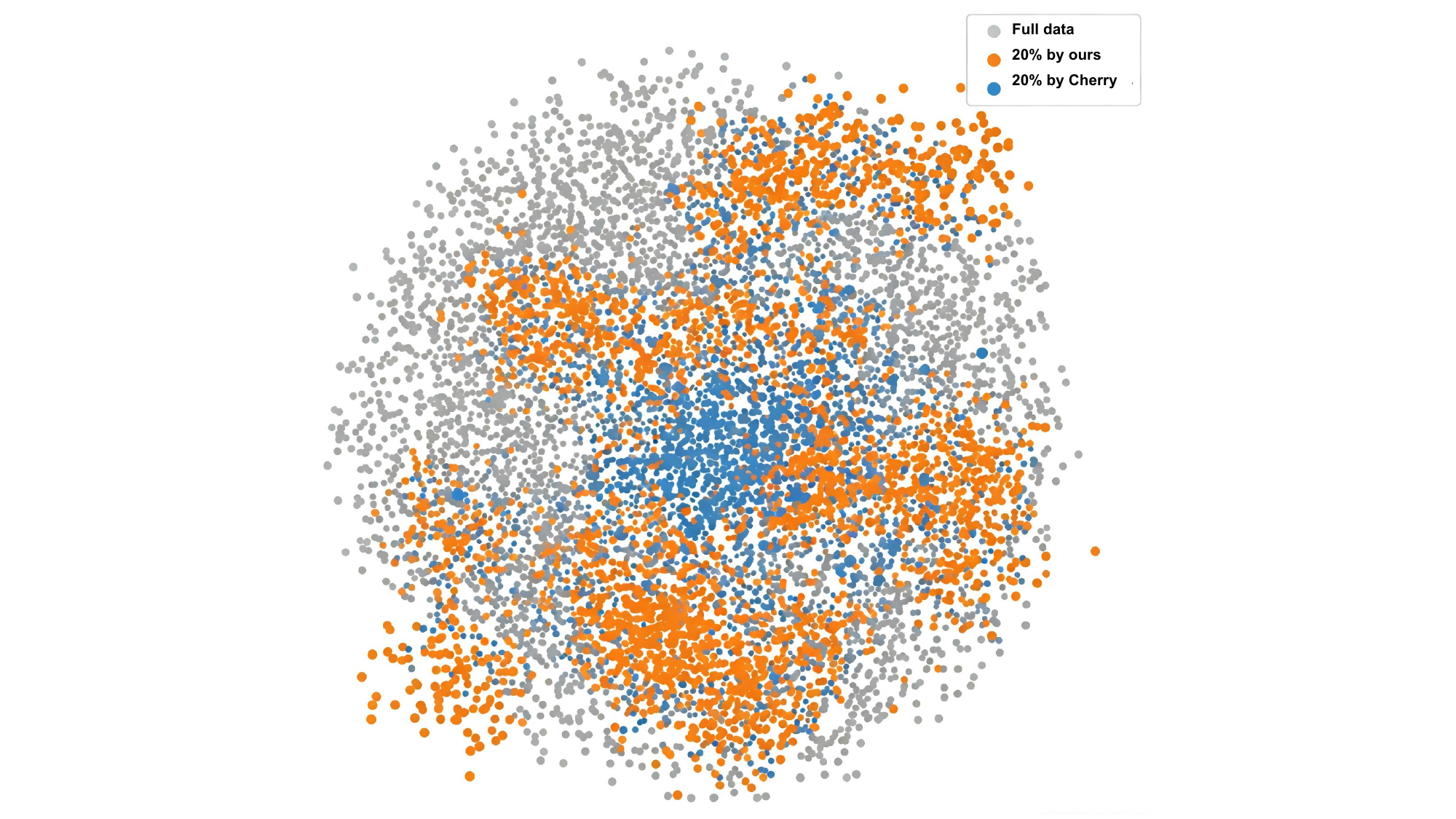}
	\caption{Visualization using t-SNE on instruction-solution embeddings from RTLCoder dataset.}
    \vspace{-1mm}
	\label{fig:tsne}
\end{figure} 

To evaluate the distribution characteristics of the selected 20\% subset within the full dataset, we compute the semantic embeddings of instructions and solutions, and project them into a 2D space via t-SNE for intuitive comparison. 
Figure~\ref{fig:tsne} presents the results of data visualization.
As shown in the figure, the orange points denote the data sampled by {\tool}, while the blue points represent the data selected by the state-of-the-art model Cherry. In contrast to Cherry, the samples selected by {\tool} are not confined to a single region but are uniformly distributed across all semantic clusters discernible in the visualization.

These results empirically confirm that {\tool} effectively preserves the global diversity of the original dataset.
This observation is supported by Proposition~\ref{prop:dist}: with $K{=}100$ clusters and $N_{\text{target}}{\approx}5{,}500$, the $L_1$ distributional deviation between the selected subset and the quality-filtered pool is bounded by $2K/N_{\text{target}} \approx 0.036$. In other words, proportional sampling guarantees less than 4\% distributional shift. Cherry lacks such a proportional allocation mechanism and thus over-samples high-density semantic regions, as reflected by its concentrated distribution in Figure~\ref{fig:tsne}.

\yg{\subsubsection{Quantitative Diversity: Cluster Coverage}}

\yg{While t-SNE provides intuitive evidence, we further quantify diversity using \emph{cluster coverage}, defined as the fraction of the $K{=}100$ clusters that contain at least one selected sample. A higher coverage indicates that the selected subset spans a broader range of structural design patterns.}

\yg{\begin{table}[htbp]
\centering
\caption{Cluster coverage (\%) of different methods at DSR = 20\%.}
\label{tab:cluster_coverage}
\begin{tabular}{lc}
\toprule
\textbf{Method} & \textbf{Coverage (\%)} \\
\midrule
EL2N    & 72.00 \\
Cherry  & 78.00 \\
VeriCoder$_R$ & 88.00 \\
Ours    & \textbf{100.00} \\
\bottomrule
\end{tabular}
\end{table}}

\yg{As shown in Table~\ref{tab:cluster_coverage}, {\tool} achieves 100\% cluster coverage by design: the proportional sampling mechanism first reserves one sample per cluster before distributing the remainder, guaranteeing that every cluster is represented in the selected subset. VeriCoder$_R$ achieves 88\% coverage: although its quality-filtered pool removes erroneous samples, some small clusters lose all members during filtering, leaving 12 clusters uncovered. Cherry and EL2N exhibit substantially lower coverage (78\% and 72\%, respectively) because their quality-oriented scoring concentrates on high-scoring samples that tend to reside in common design pattern clusters, systematically neglecting rare but structurally distinct clusters. The 28\% coverage gap between EL2N and {\tool} directly translates to the performance advantages observed in Table~\ref{tab:RQ1}: missing clusters correspond to underrepresented design patterns that the model fails to learn.}

\subsubsection{Significance Testing}
To rigorously verify that the performance improvement of {\tool} is not attributable to random error, we perform statistical analysis on model-generated results using McNemar's test~\cite{mcnemar1947note}. 
This test is applicable to paired binary outcomes and aims to compare sample differences where one method succeeds while the other fails.

The test statistic $\chi^2$ is calculated as follows:
\begin{equation}
\nonumber
\chi^2 = \frac{(|n_{01} - n_{10}| - 1)^2}{n_{01} + n_{10}},
\end{equation}
where $n_{01}$ denotes the number of samples for which Cherry predicts correctly but {\tool} predicts incorrectly, and $n_{10}$ denotes the number of samples for the opposite case.

\begin{table}[t]
\centering
\caption{McNemar's test comparing {\tool} against Cherry on Pass@1 (SeedCoder 8B).}
\label{tab:mcnemar}
\begin{tabular}{cccccc}
\toprule
\textbf{Benchmark} & $\boldsymbol{n_{00}}$ & $\boldsymbol{n_{01}}$ & $\boldsymbol{n_{10}}$ & $\boldsymbol{n_{11}}$ & \textbf{$p$-value} \\
\midrule
ResBench    & 29 & 3 & 16 & 8 & 0.0059 \\
RTLLM       & 22 & 3 & 17 & 8 & 0.0036\\
VerilogEval & 82 & 9 & 31 & 34 & 0.0009 \\
\bottomrule
\end{tabular}
\end{table}

Compared with the best-performing general-purpose baseline Cherry, {\tool} achieves statistically significant performance improvements across all three benchmarks.
As shown in Table~\ref{tab:mcnemar}, consistent trends of $n_{10}>n_{01}$ are observed on ResBench, RTLLM, and VerilogEval, with the corresponding $p$-values all below 0.01.
These results demonstrate that the advantages of {\tool} over Cherry are not attributable to random variation but are statistically significant.

\begin{tcolorbox}[width=1.0\linewidth, title={Answer to RQ1}]
\yg{{\tool} consistently outperforms state-of-the-art baselines on most evaluation scenarios, even under unified data budgets. Quantitative diversity analysis confirms that {\tool} achieves superior cluster coverage, and statistical significance testing validates the reliability of the improvements.}
\end{tcolorbox}

\begin{table*}[htbp]
\centering

\caption{The Results of {\tool} on Different Models Under Different Data Selection Rates.}
\label{tab:RQ2}
\resizebox{0.8\textwidth}{!}{
\begin{tabular}{ccccccccc}
\toprule
\textbf{Benchmark} & \textbf{Pass@k} & \textbf{100.00\%} & \textbf{46.55\%} & \textbf{25.00\%} & \textbf{20.00\%} & \textbf{15.00\%} & \textbf{10.00\%} & \textbf{5.00\%}  \\
    \bottomrule
    \multicolumn{9}{c}{\rule{0pt}{1.5em}\textit{The Results of {\tool} on \syh{Qwen2.5-Coder 7B} Under Different Data Selection Rates}} \\
    \midrule
    \multirow{3}{*}{ResBench} & Pass@1 & 41.07\% & 55.36\% & 55.36\% & 57.14\% & \textbf{{58.93\%}} & 57.14\% & 55.36\% \\
    & Pass@5 & 62.50\% & 69.64\% & \textbf{{71.43\%}} & 69.64\% & 67.86\% & 69.64\% & 62.50\% \\
    & Pass@10 & 69.64\% & 69.64\% & 69.64\% & 71.43\% & 73.21\% & \textbf{{75.00\%}} & 71.43\% \\
    \midrule
    \multirow{3}{*}{RTLLM} & Pass@1 & 34.00\% & \textbf{{36.00\%}} & \textbf{{36.00\%}} & \textbf{{36.00\%}} & 34.00\% & \textbf{{36.00\%}} & 30.00\% \\
    & Pass@5 & 40.00\% & 44.00\% & 40.00\% & \textbf{{46.00\%}} & 44.00\% & 40.00\% & 40.00\% \\
    & Pass@10 & 40.00\% & 46.00\% & 48.00\% & \textbf{{50.00\%}} & 42.00\% & 42.00\% & \textbf{50.00\%} \\
    \midrule
    \multirow{3}{*}{VerilogEval} & Pass@1 & 36.54\% & \textbf{39.74\%} & \textbf{{39.74\%}} & 39.10\% & 37.82\% & 37.18\% & 36.54\% \\
    & Pass@5 & 44.23\% & \textbf{{51.28\%}} & 48.08\% & 48.72\% & 46.15\% & 44.87\% & 44.87\% \\
    & Pass@10 & \textbf{51.28\%} & \textbf{51.28\%} & 49.36\% & \textbf{{51.28\%}} & 46.79\% & 49.36\% & 48.72\% \\
    \bottomrule
    \multicolumn{9}{c}{\rule{0pt}{1.5em}\textit{The Results of {\tool} on \syh{CodeLlama 7B} Under Different Data Selection Rates}} \\
    \midrule
    \multirow{3}{*}{ResBench} & Pass@1 & 35.71\% & 35.71\% & 46.43\% & \textbf{50.00\%} & 48.21\% & 46.43\% & 46.43\% \\
    & Pass@5 & 60.71\% & 51.79\% & \textbf{{67.86\%}} & 57.14\% & 58.93\% & 55.36\% & \textbf{67.86\%} \\
    & Pass@10 & 55.36\% & 62.50\% & \textbf{{69.64\%}} & \textbf{69.64\%} & 62.50\% & 64.29\% & 64.29\% \\
    \midrule
    \multirow{3}{*}{RTLLM} & Pass@1 & 26.00\% & \textbf{30.00\%} & \textbf{{30.00\%}} & 26.00\% & \textbf{30.00\%} & 22.00\% & 22.00\% \\
    & Pass@5 & 34.00\% & 36.00\% & \textbf{{38.00\%}} & 34.00\% & 36.00\% & 36.00\% & 32.00\% \\
    & Pass@10 & \textbf{44.00\%} & 38.00\% & 42.00\% & 38.00\% & 38.00\% & 42.00\% & 34.00\% \\
    \midrule
    \multirow{3}{*}{VerilogEval} & Pass@1 & 35.26\% & 32.05\% & \textbf{{39.74\%}} & 38.46\% & 33.97\% & 33.33\% & 34.62\% \\
    & Pass@5 & 42.95\% & 39.74\% & \textbf{{46.15\%}} & 44.87\% & 42.95\% & 44.23\% & 39.10\% \\
    & Pass@10 & 45.51\% & 42.31\% & \textbf{{48.08\%}} & \textbf{48.08\%} & 46.79\% & 44.87\% & 44.87\% \\
    \bottomrule
    \multicolumn{9}{c}{\rule{0pt}{1.5em}\textit{The Results of {\tool} on \syh{SeedCoder 8B} Under Different Data Selection Rates}} \\
    \midrule
    \multirow{3}{*}{ResBench} & Pass@1 & 37.50\% & 51.79\% & \textbf{{62.50\%}} & \textbf{62.50\%} & 55.36\% & 53.57\% & 50.00\% \\
    & Pass@5 & 57.14\% & 71.43\% & 76.79\% & 76.79\% & \textbf{80.36\%} & 75.00\% & 75.00\% \\
    & Pass@10 & 66.07\% & 82.14\% & \textbf{{85.71\%}} & 78.57\% & 80.36\% & 80.36\% & 78.57\% \\
    \midrule
    \multirow{3}{*}{RTLLM} & Pass@1 & 36.00\% & \textbf{42.00\%} & 36.00\% & 36.00\% & 38.00\% & 36.00\% & 32.00\% \\
    & Pass@5 & 38.00\% & \textbf{48.00\%} & 46.00\% & 42.00\% & \textbf{48.00\%} & \textbf{48.00\%} & 46.00\% \\
    & Pass@10 & 44.00\% & 48.00\% & \textbf{{54.00\%}} & \textbf{54.00\%} & \textbf{54.00\%} & 52.00\% & 48.00\% \\
    \midrule
    \multirow{3}{*}{VerilogEval} & Pass@1 & 39.74\% & 46.15\% & \textbf{{48.08\%}} & 44.87\% & 43.59\% & 46.15\% & \textbf{48.08\%} \\
    & Pass@5 & 50.64\% & 51.92\% & 53.85\% & \textbf{55.13\%} & 51.92\% & 52.56\% & 53.85\% \\
    & Pass@10 & 53.21\% & 55.13\% & \textbf{{56.41\%}} & 53.21\% & 55.13\% & 55.77\% & 55.77\% \\
    \bottomrule
\end{tabular}}
\end{table*}

\subsection{RQ2: \syh{Sensitivity Analysis}}

To investigate the impact of DSR on the performance of {\tool}, we tested DSR values ranging from 5\% to 100\% on the three models, with results presented in Table~\ref{tab:RQ2}.

\textbf{Full dataset is not optimal.} Training on all data (DSR=100\%) is not the optimal choice, as all three models perform significantly worse on full data than on the selected subsets. 
For Qwen2.5-Coder 7B, the average Pass@1 on full data is 37.20\%, which is 6.88 percentage points lower than the optimal value of 44.08\% at 20\% DSR (a relative improvement of 18.49\%).
For CodeLlama 7B, the average Pass@1 on full data is 32.32\%, 6.40 percentage points lower than the optimal value of 38.72\% at 25\% DSR (a relative improvement of 19.80\%).
For SeedCoder 8B, the average Pass@1 on full data is 37.75\%, 11.11 percentage points lower than the optimal value of 48.86\% at 25\% DSR (a relative improvement of 29.43\%). 
This confirms that redundant information in the original dataset interferes with the model's learning of core Verilog design knowledge.

\textbf{Inverted U-shaped relationship.} DSR exhibits an inverted U-shaped relationship with model performance, with performance peaks concentrated in the 15\%--25\% range. 
Figure~\ref{fig:dsr} presents the trend.
Qwen2.5-Coder 7B achieves optimal performance at 20\% DSR, with an average Pass@1 of 44.08\%.
Both CodeLlama 7B and SeedCoder 8B perform best at 25\% DSR, with average Pass@1 values of 38.72\% and 48.86\%, respectively. 
Notably, SeedCoder 8B attains a Pass@1 of 62.50\% on the ResBench benchmark at 25\% DSR, representing a 66.67\% relative increase over full-dataset training. 
Even at an extremely low DSR of 5\%, all three models still outperform full-dataset training (by 9.22\%, 6.28\%, and 4.86\% in relative terms for Qwen2.5-Coder 7B, CodeLlama 7B, and SeedCoder 8B, respectively), validating {\tool}'s ability to select high-value samples.
However, DSR below 10\% leads to a slight performance decline due to insufficient data coverage.

Considering both performance gains and data efficiency, the optimal configurations are: 20\% DSR for Qwen2.5-Coder 7B (PRR=118.49\%), and 25\% DSR for CodeLlama 7B and SeedCoder 8B (PRR of 119.80\% and 129.43\%, respectively). 
This configuration reduces data requirements by 75\%--80\% while achieving relative performance improvements of 18.49\%--29.43\%.

\textbf{Theoretical interpretation.}
The inverted U-shaped trend is consistent with the classical bias-noise trade-off in supervised learning~\cite{domingos2000unified}. At low DSR, the selected subset has insufficient coverage of hardware design patterns, leading to underfitting. At high DSR, redundant and misaligned samples accumulate, degrading the signal-to-noise ratio of the training set. The optimal DSR represents the point where the marginal coverage gain is offset by the marginal noise intake. This also explains why full-dataset training (DSR=100\%) consistently underperforms: the RTLCoder-27k dataset, constructed via automated keyword expansion, contains non-trivial noise that penalizes models when included without filtering.

\begin{figure*}[t]
  \centering
  \begin{subfigure}[t]{0.32\textwidth}
    \centering
    \includegraphics[width=\linewidth]{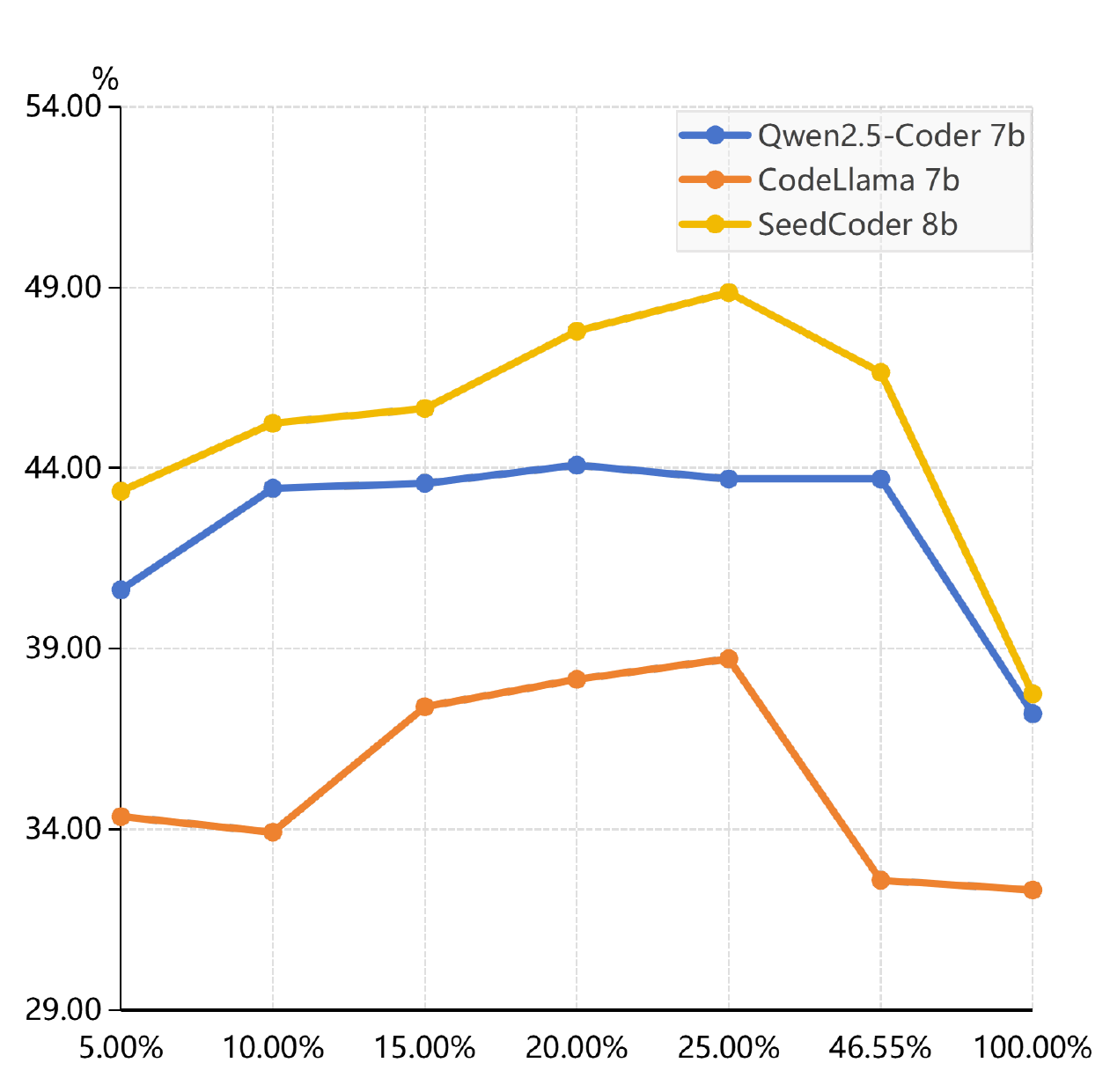}
    \caption{DSR on Avg Pass@1}
    \label{fig:dsr-pass1}
  \end{subfigure}\hfill
  \begin{subfigure}[t]{0.32\textwidth}
    \centering
    \includegraphics[width=\linewidth]{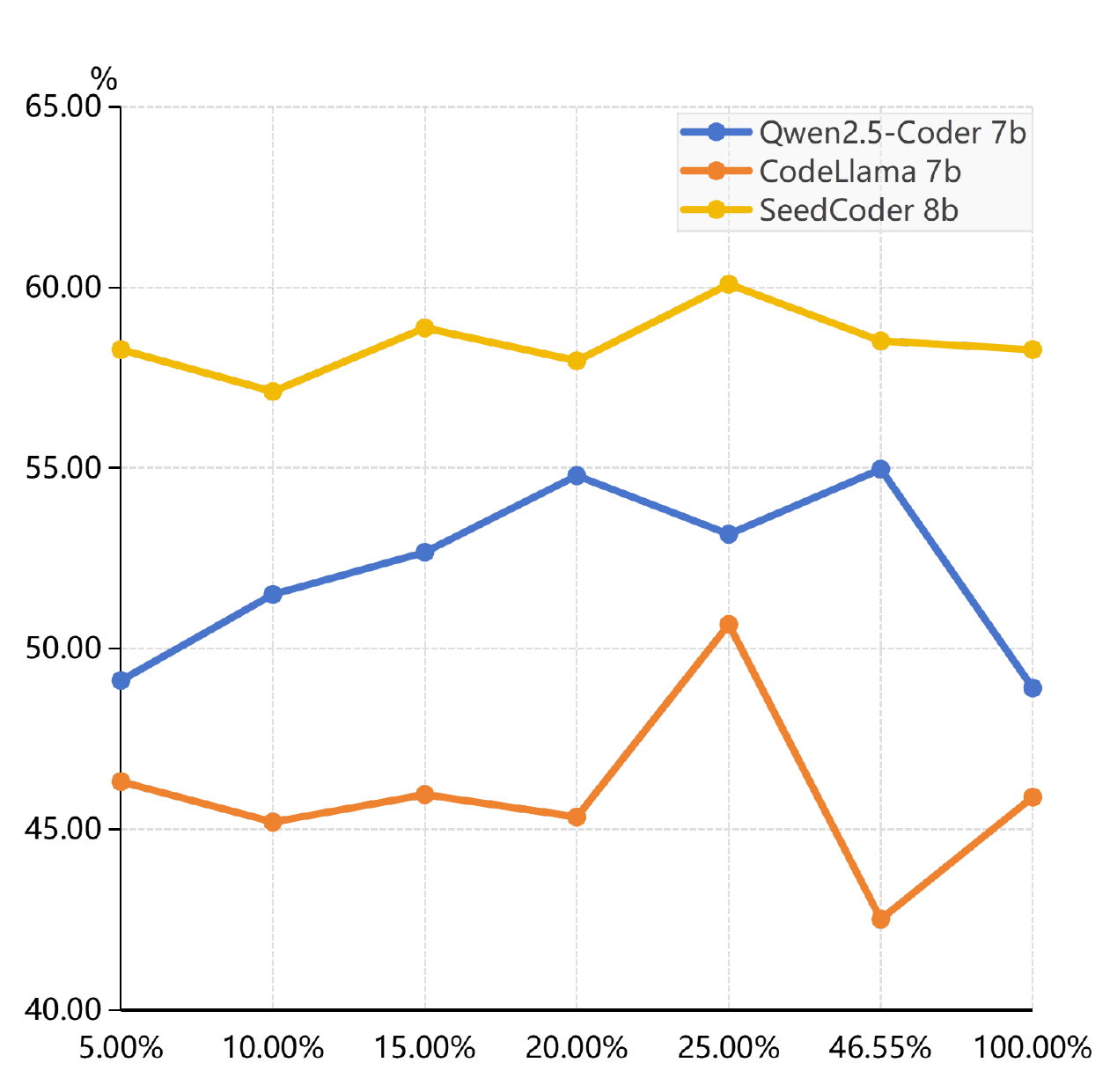}
    \caption{DSR on Avg Pass@5}
    \label{fig:dsr-pass5}
  \end{subfigure}\hfill
  \begin{subfigure}[t]{0.32\textwidth}
    \centering
    \includegraphics[width=\linewidth]{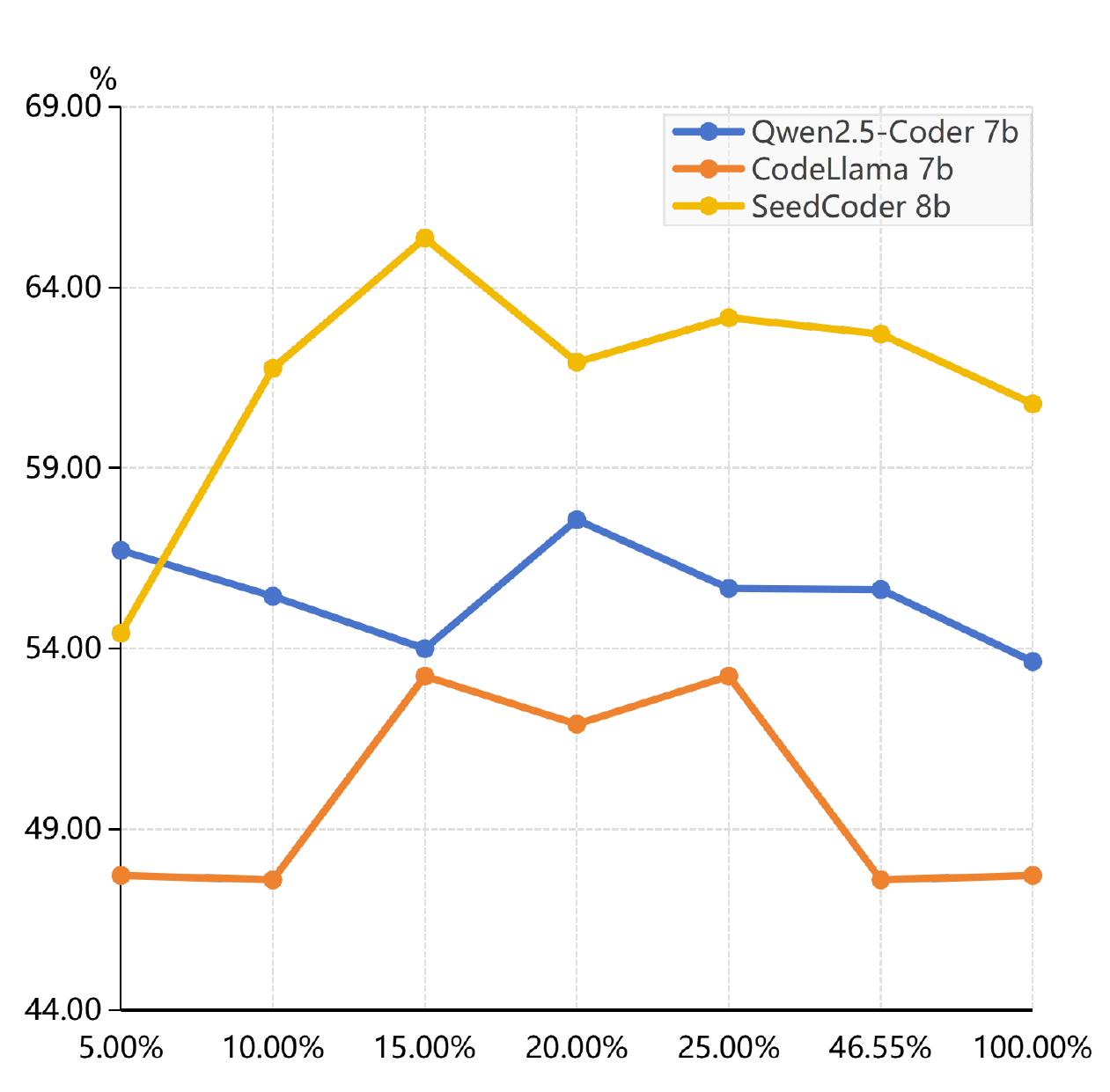}
    \caption{DSR on Avg Pass@10}
    \label{fig:dsr-pass10}
  \end{subfigure}
  \caption{Data selection rate analysis under different evaluation metrics.}
  \label{fig:dsr}
\end{figure*}

\begin{tcolorbox}[width=1.0\linewidth, title={Answer to RQ2}]
The performance of {\tool} exhibits an inverted U-shaped trend with increasing DSR, with the optimal range at 20\%--25\%.
This configuration maximizes performance gains while reducing training data volume, validating the effectiveness of the selection method.
\end{tcolorbox}

\subsection{RQ3: Component Analysis}

\begin{figure*}[htb]
	\centering
    \includegraphics[width=\textwidth]{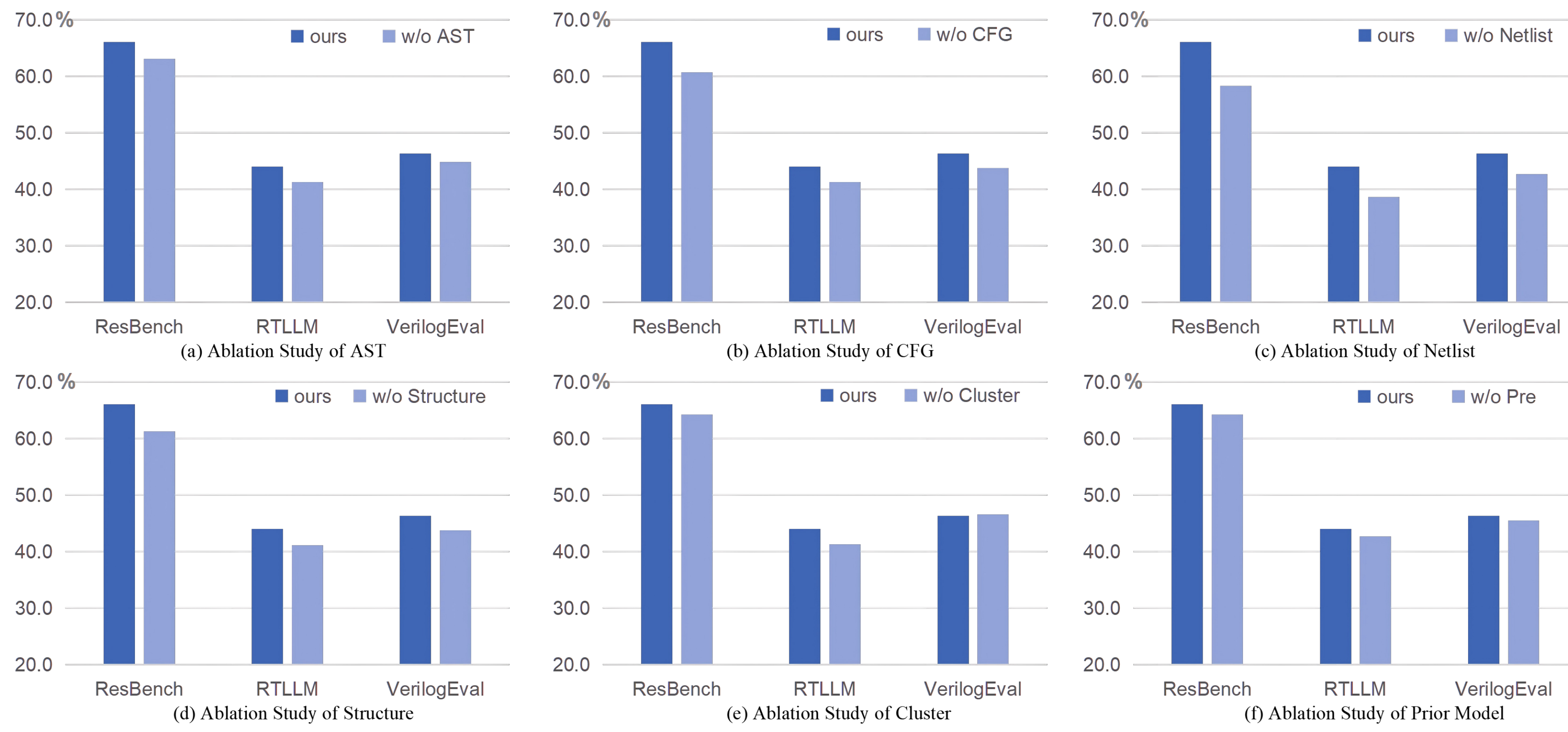}
	\caption{Ablation studies of different components.}
	\label{fig:ablation}
\end{figure*} 

To systematically validate the contributions of each component in {\tool}, we conduct ablation studies on three benchmarks (ResBench, RTLLM, and VerilogEval) with average Pass@k, as illustrated in Figure~\ref{fig:ablation}. 
Note that the VeriCoder baseline in RQ1 already serves as a natural ablation of the testbench verification component, as it retains only functional correctness filtering without IFD scoring, structural features, or diversity-aware clustering. The results in Table~\ref{tab:RQ1} show that {\tool} consistently outperforms VeriCoder in average performance while using substantially less data (20\%--25\% vs.\ 46.55\%), confirming the added value of the subsequent pipeline stages. Below, we focus on finer-grained ablations of the remaining components.

\textbf{(a) AST.}
Removing the AST leads to drops in average Pass@k of 4.50\%, 6.07\%, and 3.23\% on ResBench, RTLLM, and VerilogEval, respectively. 
This demonstrates that the hierarchical syntactic representation provided by the AST serves as the foundation for the model to master Verilog syntax rules and generate structurally well-formed code.

\textbf{(b) CFG.}
Ablating the CFG results in average Pass@k decreases of 8.10\%, 6.07\%, and 5.54\% across the three benchmarks, highlighting the critical role of CFG in modeling execution-flow logic, which is especially important for control-flow-intensive tasks such as ResBench.

\textbf{(c) Netlist.}
Notably, removing the Netlist causes the most severe single-component degradation: average Pass@k drops by 11.72\%, 12.11\%, and 7.85\% on ResBench, RTLLM, and VerilogEval.
This directly validates that Netlist, as a fundamental hardware structural representation, plays a core role in capturing Verilog semantics and ensuring generated code complies with physical implementation constraints.

\textbf{(d) Structure.}
When the entire structural module (AST, CFG, and Netlist combined) is removed, average Pass@k reductions are 7.20\%, 6.57\%, and 5.54\% on the three benchmarks. Notably, these drops are \emph{smaller} than those caused by removing Netlist alone (11.72\%, 12.11\%, and 7.85\%). We attribute this to a \textbf{feature space consistency} effect. Removing only Netlist leaves AST and CFG in the fusion, but the resulting feature space lacks the hardware-implementation semantics that Netlist encodes. K-Means then clusters in this incomplete space, producing distorted partitions that misgroup structurally distinct samples. The corrupted cluster structure propagates through proportional sampling, ultimately degrading training quality. In contrast, removing all structural features causes the pipeline to fall back to a pure textual embedding space. Although this space loses all structural information, it is internally consistent: clustering operates on a coherent representation, and proportional sampling remains meaningful. This consistency partially compensates for the information loss, resulting in a smaller overall degradation. The finding highlights that \emph{feature space coherence matters more than feature volume} in fusion-based selection pipelines.

\textbf{(e) Cluster.}
To investigate the impact of the diversity clustering strategy on model capability, we ablate the Cluster module. 
It results in Pass@k drops of 2.69\% and 6.07\% on ResBench and RTLLM, respectively.
This indicates that the diversity strategy groups structurally similar code samples into coherent clusters, helping the model capture Verilog-specific patterns and improve adaptability to diverse scenarios. 
However, on VerilogEval, average Pass@k increases slightly by 0.45\%, a negligible change. 
A possible explanation is that VerilogEval's problem distribution is relatively concentrated, and the clustering strategy provides limited additional diversity benefit in this scenario. 
Such dataset-dependent effects reveal that diversity clustering is a context-aware strategy, whose effectiveness depends on the complexity and diversity characteristics of the target benchmark.

\textbf{(f) Prior Model.}
To evaluate the contribution of the prior model, we ablate the Prior Model module. After removal, average Pass@k decreases by 2.69\%, 3.02\%, and 1.85\% on ResBench, RTLLM, and VerilogEval, respectively.
Although the performance gain is smaller than that of the structural information module, the consistent performance drop across all benchmarks confirms that the prior model provides more domain-aware embeddings for both IFD scoring and feature extraction, contributing to more effective quality assessment and diversity-aware selection.


\yg{\subsubsection{Feature Importance Analysis}}

\yg{To further validate that the 109-dimensional structural features encode meaningful and non-redundant information, we conduct a feature importance analysis using Random Forest~\cite{breiman2001random}. We train a Random Forest classifier to predict which cluster each sample belongs to, using the structural features as input, and measure the importance of each feature dimension via mean decrease in impurity.}

\yg{Figure~\ref{fig:feature_importance} presents the feature importance scores grouped by feature category. The results reveal three key observations. First, high-importance features are distributed across all three groups (AST, CFG, and Netlist), confirming that each group captures distinct and complementary structural information rather than redundant signals. Second, Netlist features exhibit the highest average importance, consistent with the ablation finding that removing Netlist causes the most severe performance degradation. Third, the importance distribution is non-uniform within each group, indicating that certain structural dimensions (e.g., gate type distribution in Netlist, branching density in CFG) are particularly informative for distinguishing hardware design patterns.}

\yg{\begin{figure*}[htbp]
\centering
\includegraphics[width=\linewidth]{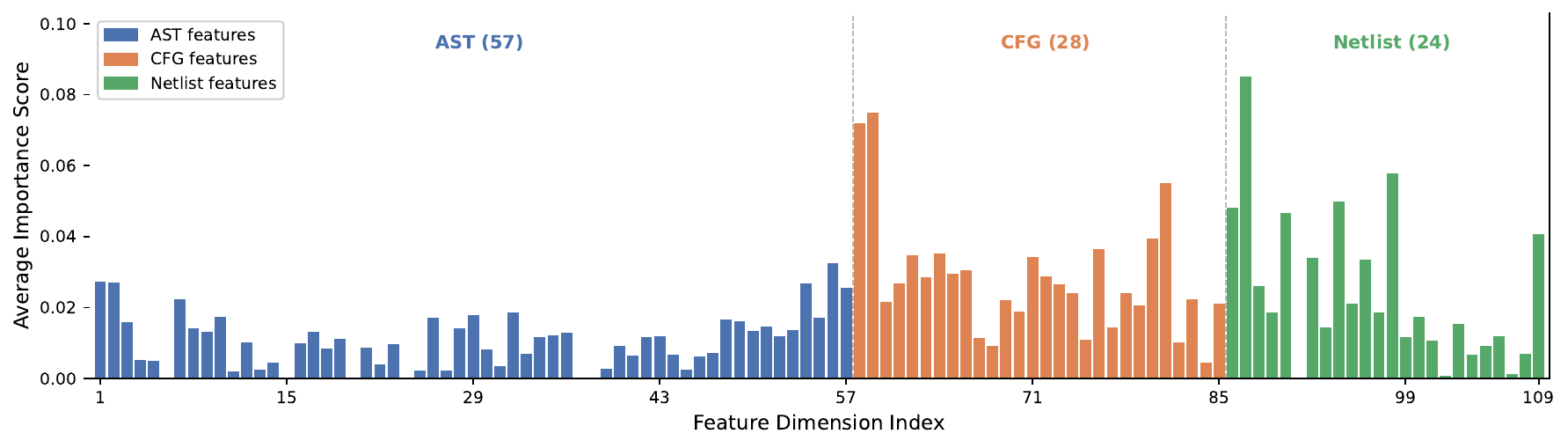}
\caption{Feature importance scores of the 109 structural dimensions, grouped by AST (blue), CFG (orange), and Netlist (green). Higher values indicate greater contribution to cluster discrimination.}
\label{fig:feature_importance}
\end{figure*}}


\yg{\subsubsection{Comparison with Alternative Code Representations}}

\yg{A natural question is whether the proposed 109-dimensional handcrafted structural features can be replaced by learned code representations. To investigate this, we consider two categories of alternatives: pretrained code embeddings and graph neural networks (GNNs).}

\yg{\textbf{Pretrained code embeddings.} We compare against CodeBERT~\cite{feng2020codebert} and UniXcoder~\cite{guo2022unixcoder}, two widely adopted pretrained code models. For each Verilog sample, we feed the solution code into the pretrained encoder and extract the \texttt{[CLS]} token representation as the code embedding vector ($d{=}768$ for both models). This embedding replaces the 109-dimensional structural feature vector $\hat{\boldsymbol{f}}_i^{\text{struct}}$ in the fusion step (Eq.~5): after Z-score normalization, it is concatenated with the textual embedding $\hat{\boldsymbol{h}}_i^Q$ to form the fused representation for downstream clustering and adaptive sampling. All other pipeline stages remain identical.}

\yg{\begin{table}[htbp]
\centering
\caption{Comparison of structural representations on Qwen2.5-Coder 7B (DSR = 20\%). ``Ours'' denotes the 109-dimensional handcrafted features. Average Pass@1 (\%) is reported.}
\label{tab:repr_comparison}
\begin{tabular}{lcccc}
\toprule
\textbf{Method} & \textbf{Dim.} & \textbf{ResBench} & \textbf{RTLLM} & \textbf{VerilogEval} \\
\midrule
CodeBERT    & 768 & 51.79\% & 32.00\% & 35.90\% \\
UniXcoder   & 768 & 53.57\% & 34.00\% & 37.18\% \\
Ours & 109 & \textbf{57.14\%} & \textbf{36.00\%} & \textbf{39.10\%} \\
\bottomrule
\end{tabular}
\end{table}}

\yg{As shown in Table~\ref{tab:repr_comparison}, our handcrafted structural features consistently outperform both pretrained code embeddings despite having substantially lower dimensionality (109 vs.\ 768). Specifically, our features improve average Pass@1 over CodeBERT by 5.35, 4.00, and 3.20 percentage points on ResBench, RTLLM, and VerilogEval, respectively. UniXcoder narrows the gap slightly owing to its cross-modal pretraining objective, yet still falls short by 3.57, 2.00, and 1.92 percentage points. These results indicate that the advantage of our approach stems not from increased feature dimensionality but from the hardware-grounded semantics encoded by AST, CFG, and Netlist features. Pretrained code embeddings are trained on general-purpose programming languages and primarily capture token-level and syntactic patterns; they lack explicit representations of gate-level topology, signal connectivity, and module hierarchy that are essential for distinguishing Verilog design patterns.}

\yg{\textbf{Why not GNN-based representations?} An alternative approach is to apply graph neural networks directly on AST, CFG, or Netlist graphs to learn node-level or graph-level embeddings. We deliberately chose handcrafted features over GNN embeddings for three reasons. First, GNN training requires a task-specific supervised objective (e.g., graph classification labels), which is unavailable in our data selection setting where no ground-truth quality or diversity labels exist. Second, Verilog programs exhibit extreme variability in graph size: the AST of a simple combinational module may contain tens of nodes, while a complex sequential controller can produce thousands. This variability causes severe mini-batch imbalance and padding overhead for GNN training, and the resulting embeddings are sensitive to graph size rather than structural semantics. Third, our handcrafted features are interpretable and hardware-grounded: each dimension corresponds to a well-defined structural property (e.g., gate type distribution, branching density, module hierarchy depth), enabling the feature importance analysis presented above. GNN embeddings, by contrast, lack such interpretability. The empirical results in Table~\ref{tab:repr_comparison} further validate that domain-specific handcrafted features, despite their lower dimensionality, outperform general-purpose learned representations for Verilog data selection.}

\yg{\subsubsection{Analysis of Contributing Factors}}
\label{sec:why}

\yg{The consistent superiority of {\tool} across models and benchmarks stems from three contributing factors that operate synergistically.
First, quality filtering removes noise at complementary granularities: testbench verification eliminates functionally incorrect samples, while IFD scoring discards semantically misaligned instruction-solution pairs. Their sequential application ensures that the surviving samples are both behaviorally correct and pedagogically effective.
Second, structural features capture hardware semantics invisible to textual embeddings. The pronounced contribution of Netlist features in the ablation study and feature importance analysis above reflects the fundamental property that Verilog programs are ultimately synthesized into physical circuits. Netlist features encode gate-level topology and signal connectivity that textual embeddings cannot express, which also explains why {\tool} achieves the largest gains on SeedCoder 8B.
Third, proportional sampling prevents distributional collapse. Without diversity-aware allocation, quality-only selection tends to over-represent common design patterns. The proportional mechanism, backed by Proposition~\ref{prop:dist}, bounds the distributional deviation to less than 4\%, ensuring that rare but important design patterns are retained.}

\begin{tcolorbox}[width=1.0\linewidth, title={Answer to RQ3}]
\yg{All components contribute positively to the overall performance. The VeriCoder baseline validates the testbench verification stage, while finer-grained ablations confirm the necessity of structural features, diversity clustering, and the prior model. Feature importance analysis demonstrates that AST, CFG, and Netlist features capture complementary structural information, and comparison with pretrained code embeddings validates the advantage of domain-specific handcrafted features.} The three contributing factors (quality filtering, structural semantics, proportional sampling) operate synergistically to maximize the signal-to-noise ratio and distributional coverage of the training subset.
\end{tcolorbox}

%% file: sections/6.discussion.tex
\section{Discussion}
\label{sec:discussion}

\subsection{Relationship Analysis}

To explore the relationship between model parameter size and the optimal data selection rate (DSR), we extend our sensitivity analysis to a smaller-scale model: Qwen2.5-Coder 1.5B.
Combined with the results of the three larger models reported in Table~\ref{tab:RQ2}, we compare the optimal DSR across four models spanning 1.5B to 8B parameters.

As shown in Table~\ref{tab:discussion1}, Qwen2.5-Coder 1.5B achieves its best performance at a DSR of 15\%, with average Pass@1/5/10 scores of 41.66\%, 52.60\%, and 57.88\%, respectively.
This represents a 51.03\% relative improvement in average Pass@1 over full-dataset training (27.59\%).
In contrast, the three larger models in Table~\ref{tab:RQ2} require higher DSRs to reach their performance peaks: Qwen2.5-Coder 7B peaks at 20\% DSR (average Pass@1 = 44.08\%), while both CodeLlama 7B and SeedCoder 8B peak at 25\% DSR (average Pass@1 = 38.72\% and 48.86\%, respectively).

\begin{table}[htbp]
\centering
\caption{Summary of optimal DSR across models of different sizes.}
\label{tab:dsr-summary}
\begin{tabular}{ccc}
\toprule
\textbf{Model} & \textbf{Parameters} & \textbf{Optimal DSR} \\
\midrule
Qwen2.5-Coder 1.5B & 1.5B & 15\% \\
Qwen2.5-Coder 7B   & 7B   & 20\% \\
CodeLlama 7B        & 7B   & 25\% \\
SeedCoder 8B        & 8B   & 25\% \\
\bottomrule
\end{tabular}
\end{table}

Table~\ref{tab:dsr-summary} summarizes the optimal DSR for each model.
A clear trend emerges: as model parameter size increases, the optimal DSR shifts upward.
The smallest model (1.5B) achieves peak performance with only 15\% of the data, whereas the 7B--8B models require 20\%--25\%.
This phenomenon can be attributed to the difference in learning capacity across model scales.
Smaller models possess limited representational bandwidth, making them more susceptible to noise in the training data; consequently, they benefit most from aggressive data pruning that maximizes the signal-to-noise ratio (SNR) of the training subset.
Larger models, with greater learning capacity, can absorb a broader range of training patterns without overfitting to noise, and thus require a larger volume of high-quality data to fully realize their performance potential.

This finding has practical implications for deploying {\tool}: practitioners should calibrate the data selection rate based on the target model's parameter scale, using lower DSRs for smaller models and moderately higher DSRs for larger ones, rather than applying a fixed ratio across all model sizes.

\subsection{Diversified Generation vs Diversified Selection}

\begin{table}[htbp]
  \centering
  \caption{Diversified Generation vs Diversified Selection on Qwen2.5-Coder \syh{7B}.}
  \label{tab:discussion3}
  \begin{tabular}{cccc}
  \toprule
  \textbf{Benchmark} & \textbf{Metric} & \textbf{FAC\_Synthesis} & \textbf{Ours} \\
  \midrule
  \multirow{3}{*}{ResBench}
    & Pass@1  & 42.86\% & \textbf{57.14\%} \\
    & Pass@5  & 58.93\% & \textbf{69.64\%} \\
    & Pass@10 & 67.86\% & \textbf{71.43\%} \\
  \midrule
  \multirow{3}{*}{RTLLM}
    & Pass@1  & 30.00\% & \textbf{36.00\%} \\
    & Pass@5  & 34.00\% & \textbf{46.00\%} \\
    & Pass@10 & 44.00\% & \textbf{50.00\%} \\
  \midrule
  \multirow{3}{*}{VerilogEval}
    & Pass@1  & 32.69\% & \textbf{39.10\%} \\
    & Pass@5  & 44.23\% & \textbf{48.72\%} \\
    & Pass@10 & 45.51\% & \textbf{51.28\%} \\
  \bottomrule
  \end{tabular}
  \end{table}

Both data diversity and data quality are critical for effective fine-tuning. While {\tool} achieves diversity through \textit{selection} from existing data, an alternative paradigm is to \textit{synthesize} diverse data directly.
To compare these two strategies, we adopt FAC\_Synthesis~\cite{li2026less} as a representative of the diversity-driven generation approach.
FAC\_Synthesis leverages Sparse Autoencoders (SAEs) to identify under-covered features in the training corpus and employs contrastive sample pairs to guide LLMs to generate data that activates these missing features, using Feature Activation Coverage (FAC) as an interpretable diversity metric in the feature space.

In contrast, {\tool} promotes diversity through a fundamentally different mechanism: it fuses textual semantic embeddings with Verilog-specific structural features (AST, CFG, and Netlist) into a unified representation, and applies clustering-based adaptive sampling to ensure broad coverage of diverse hardware design patterns within the existing dataset.

To ensure a fair comparison, we apply FAC\_Synthesis to generate the same volume of Verilog training data and fine-tune \syh{Qwen2.5-Coder 7B} under identical settings.
As shown in Table~\ref{tab:discussion3}, {\tool} consistently outperforms the diversity-driven generation baseline FAC\_Synthesis across all three benchmarks and all evaluation metrics. 
Specifically, we achieve a 33.31\% improvement in Pass@1 (57.14\% vs. 42.86\%) and maintain a 5.26\% lead at Pass@10 (71.43\% vs. 67.86\%) on ResBench. On RTLLM, we observe a 20.00\% boost in Pass@1 (36.00\% vs. 30.00\%) and a significant 35.29\% improvement in Pass@5 (46.00\% vs. 34.00\%). On VerilogEval, Pass@1 increases by 19.61\% (39.10\% vs. 32.69\%) and Pass@10 improves by 12.68\% (51.28\% vs. 45.51\%). 
These results confirm that diversified selection is a more effective and practical strategy than diversified generation for enhancing Verilog code generation performance, especially in specialized domains like Verilog RTL synthesis.

\begin{table*}[htbp]
\centering
\caption{Result Analysis of {\tool} on Qwen2.5-Coder 1.5B Under Different Sampling Rates.}
\label{tab:discussion1}
\resizebox{0.82\textwidth}{!}{
\begin{tabular}{ccccccccc}
\toprule
\textbf{Benchmark} & \textbf{Pass@k} & \textbf{100.00\%} & \textbf{46.55\%} & \textbf{25.00\%} & \textbf{20.00\%} & \textbf{15.00\%} & \textbf{10.00\%} & \textbf{5.00\%}  \\
    \midrule
    \multirow{3}{*}{ResBench} 
      & Pass@1  & 28.57\% & 46.43\% & 48.21\% & \textbf{51.79\%} & \textbf{51.79\%} & 50.00\% & 44.64\% \\
      & Pass@5  & 58.93\% & 61.71\% & 60.71\% & 60.71\% & \textbf{{69.64\%}} & 62.50\% & 64.29\% \\
      & Pass@10 & 64.29\% & 64.29\% & 64.29\% & 62.50\% & 75.00\% & \textbf{78.57\%} & 60.71\% \\
    \midrule
    \multirow{3}{*}{RTLLM} 
      & Pass@1  & 26.00\% & 26.00\% & 28.00\% & 26.00\% & \textbf{36.00\%} & 32.00\% & 20.00\% \\
      & Pass@5  & 32.00\% & 36.00\% & 34.00\% & 34.00\% & \textbf{{42.00\%}} & 40.00\% & 28.00\% \\
      & Pass@10 & 34.00\% & 40.00\% & 38.00\% & 42.00\% & 48.00\% & \textbf{50.00\%} & 30.00\% \\
    \midrule
    \multirow{3}{*}{VerilogEval} 
      & Pass@1  & 28.21\% & 37.18\% & 35.90\% & 37.82\% & \textbf{37.18\%} & 35.26\% & 28.85\% \\
      & Pass@5  & 38.46\% & 43.59\% & 42.95\% & 42.31\% & 46.15\% & \textbf{46.79\%} & 39.74\% \\
      & Pass@10 & 43.59\% & 43.59\% & 46.15\% & 44.87\% & \textbf{{50.64\%}} & 49.36\% & 38.46\% \\
    \midrule
    \multirow{3}{*}{Avg} 
      & Pass@1  & 27.59\% & 36.54\% & 37.37\% & 38.54\% & \textbf{41.66\%} & 39.09\% & 31.16\% \\
      & Pass@5  & 43.13\% & 47.10\% & 45.89\% & 45.67\% & \textbf{{52.60\%}} & 49.76\% & 44.01\% \\
      & Pass@10 & 47.29\% & 49.29\% & 49.48\% & 49.79\% & 57.88\% & \textbf{59.31\%} & 43.06\% \\
    \bottomrule
\end{tabular}}
\end{table*}
  
\yg{\subsection{Generalization to Additional Model Scales}}
\label{sec:more_models}

\yg{To further validate the generalizability of {\tool} across a wider range of model architectures and parameter scales, we extend our evaluation to two additional models: Qwen2.5-Coder-14B (14B parameters) and Qwen3.5-4B (4B parameters). This extension allows us to examine whether the conclusions drawn from 1.5B--8B models hold at both larger and more recent scales.}

\yg{\begin{table}[htbp]
\centering
\caption{Results of {\tool} on additional models at DSR = 20\% (average Pass@1, \%). ``All'' denotes full-dataset training.}
\label{tab:more_models}
\begin{tabular}{lccc}
\toprule
 & \textbf{Qwen3.5-4B} & \textbf{Qwen2.5-Coder-14B} \\
\midrule
\textbf{All}    & 40.20\% & 42.50\% \\
\textbf{Random} & 43.50\% & 45.80\% \\
\textbf{Cherry} & 46.80\% & 48.60\% \\
\textbf{Ours}   & \textbf{48.30\%} & \textbf{50.20\%} \\
\textbf{PRR}    & 120.15\% & 118.12\% \\
\bottomrule
\end{tabular}
\end{table}}

\yg{As shown in Table~\ref{tab:more_models}, both models confirm the effectiveness of {\tool} under a fixed DSR of 20\%.
For Qwen3.5-4B, {\tool} achieves an average Pass@1 of 48.30\%, representing a 20.15\% relative improvement over full-dataset training (40.20\%) and outperforming Random (+4.80 pp) and Cherry (+1.50 pp).
For Qwen2.5-Coder-14B, {\tool} achieves a higher average Pass@1 of 50.20\%, with a relative improvement of 18.12\% over full-dataset training (42.50\%) and consistent gains over Random (+4.40 pp) and Cherry (+1.60 pp).
Notably, both models surpass the Qwen2.5-Coder 7B result (44.08\%) in absolute performance, validating that {\tool}'s quality-diversity co-optimization paradigm scales favorably with both model capacity and architectural advancement.
These results extend the generalizability of our conclusions from 1.5B--8B to 1.5B--14B parameters across five distinct architectures, reinforcing the robustness of {\tool} as a model-agnostic data selection framework.}

\yg{\subsection{Computational Cost Analysis}}
\label{sec:complexity}

\yg{Table~\ref{tab:complexity} summarizes the theoretical time complexity of each pipeline stage, where $N$ is the dataset size, $K$ the number of clusters, $d_t$ the textual embedding dimension, $d_s{=}109$ the structural feature dimension, $d{=}d_t{+}d_s$, and $T$ the number of K-Means iterations. The dominant stages (testbench verification, structural feature extraction) are $O(N)$ and embarrassingly parallel across samples.}

\yg{\begin{table}[htbp]
\centering
\caption{Per-stage computational complexity of {\tool}.}
\label{tab:complexity}
\begin{tabular}{lcc}
\toprule
\textbf{Stage} & \textbf{Complexity} & \textbf{Parallelizable} \\
\midrule
Testbench Verification     & $O(N)$       & \ding{51} \\
Prior Model Training       & $O(NKd_t)$   & \ding{55} \\
IFD Scoring                & $O(N)$       & \ding{51} \\
Structural Feature Extraction & $O(N)$    & \ding{51} \\
Feature Fusion \& Clustering  & $O(TNKd)$ & \ding{55} \\
Adaptive Sampling          & $O(N\log N)$ & \ding{55} \\
\bottomrule
\end{tabular}
\end{table}}

\yg{To complement the theoretical analysis, Table~\ref{tab:preprocess_cost} reports the actual wall-clock time of each preprocessing stage on RTLCoder-27k using a single NVIDIA RTX 3090 GPU. The total preprocessing overhead is approximately 1 hour, which is modest compared to the downstream fine-tuning cost. Moreover, this cost is amortized across all downstream models, since the same selected subset can be reused for fine-tuning multiple architectures.}

\yg{\begin{table}[htbp]
\centering
\caption{Actual preprocessing time of each {\tool} pipeline stage on RTLCoder-27k ($N{=}27{,}000$).}
\label{tab:preprocess_cost}
\begin{tabular}{lcc}
\toprule
\textbf{Stage} & \textbf{Wall Time} & \textbf{Hardware} \\
\midrule
Testbench Verification        & $\sim$25 min & 1$\times$RTX 3090 \\
Prior Model Training          & $\sim$12 min & 1$\times$RTX 3090 \\
IFD Scoring                   & $\sim$8 min  & 1$\times$RTX 3090 \\
Structural Feature Extraction & $\sim$15 min & CPU (parallel) \\
Feature Fusion \& Clustering  & $<$1 min     & CPU \\
Adaptive Sampling             & $<$1 min     & CPU \\
\midrule
\textbf{Total Preprocessing}  & $\sim$61 min & -- \\
\bottomrule
\end{tabular}
\end{table}}


\yg{Table~\ref{tab:discussion2} further compares the end-to-end time (preprocessing + fine-tuning) of {\tool} against all baselines across three models. All experiments use identical training settings: batch size 2, gradient accumulation steps 8, 3 epochs, learning rate 2e-4, and LoRA with $r{=}32$, $\alpha{=}32$.}

\begin{table}[htbp]
  \centering
  \caption{End-to-end time cost comparison of different methods.}
  \label{tab:discussion2}
  \begin{tabular}{cccc}
    \toprule
    \textbf{Method} & \textbf{Qwen2.5-Coder} & \textbf{CodeLlama} & \textbf{SeedCoder} \\
    \midrule
    All       & 15.1h & 17.3h & 19.8h \\
    VeriCoder & 8.7h  & 9.2h  & 10.5h \\
    Random    & 2.5h  & 2.7h  & 2.5h  \\
    EL2N      & 5.8h  & 6.3h  & 7.8h  \\
    Cherry    & 3.3h  & 3.5h  & 4.1h  \\
    Ours      & 2.7h  & 3.0h  & 3.6h  \\
    \bottomrule
  \end{tabular}
\end{table}

\yg{{\tool} achieves end-to-end time costs comparable to random sampling (2.7--3.6h vs.\ 2.5h) while delivering substantially better performance. In contrast, EL2N incurs significant overhead (5.8--7.8h) due to gradient computation, and VeriCoder (8.7--10.5h) retains 46.55\% of the data, resulting in higher training cost despite its simpler filtering pipeline. Overall, {\tool} reduces total time by over 80\% compared to full-dataset training (82.1\% for Qwen2.5-Coder 7B, 82.6\% for CodeLlama 7B, 81.8\% for SeedCoder 8B), with the reduction ratio remaining stable regardless of model size.}

\yg{\subsection{Sensitivity to the Number of Clusters $K$}}
\label{sec:k_sensitivity}

\yg{The number of clusters $K$ is a critical hyper-parameter that controls the granularity at which diversity is modeled during adaptive sampling. When $K$ is too small, semantically distinct design patterns are conflated into a single cluster, diminishing the diversity resolution of the proportional allocation. Conversely, an excessively large $K$ over-partitions coherent design families into singleton-like clusters, amplifying noise in the allocation ratios and degrading the quality of the selected subset.}

\yg{To quantify this sensitivity, we sweep $K \in \{50, 75, 100, 125, 150\}$ while holding the model (Qwen2.5-Coder 7B) and the data selection rate (DSR = 20\%) fixed. Table~\ref{tab:k_sensitivity} reports the average Pass@1 across the three benchmarks for each setting.}

\yg{\begin{table}[htbp]
\centering
\caption{Sensitivity of average Pass@1 (\%) to the number of clusters $K$ on Qwen2.5-Coder 7B (DSR = 20\%).}
\label{tab:k_sensitivity}
\begin{tabular}{cccccc}
\toprule
$K$ & 50 & 75 & 100 & 125 & 150 \\
\midrule
Avg Pass@1 & 42.03  & 43.26  & \textbf{44.08} & 43.51  & 42.74  \\
\bottomrule
\end{tabular}
\end{table}}

\yg{Performance exhibits a clear inverted-U trend: average Pass@1 rises from 42.03\% ($K{=}50$) to a peak of 44.08\% ($K{=}100$), then gradually declines to 42.74\% ($K{=}150$). The ascending phase reflects the benefit of finer-grained diversity modeling: more clusters allow the proportional sampler to distinguish and preserve a wider spectrum of hardware design patterns. The descending phase indicates the onset of over-fragmentation, where clusters become too small to yield meaningful proportional quotas, and the allocation degenerates toward uniform random sampling. Notably, performance remains within a 2\% margin of the optimum for $K \in [75, 150]$, suggesting that {\tool} is robust to moderate perturbations around the default setting ($K{=}100$). In practice, we recommend $K{=}100$ as a sensible default, and note that selecting $K$ in proportion to the square root of the dataset size (i.e., $K \approx \sqrt{N}$) provides a simple rule of thumb that empirically aligns with the observed optimum.}

\yg{\subsection{Cluster Interpretability Analysis}}
\label{sec:cluster_interpret}

\yg{To verify that the learned clusters correspond to meaningful hardware design categories rather than arbitrary partitions, we conduct a qualitative analysis of representative clusters at $K{=}100$ on the quality-filtered RTLCoder-27k dataset.}

\yg{We manually inspect the top-ranked samples (by IFD score) from 20 randomly selected clusters and classify them by hardware design category. Table~\ref{tab:cluster_examples} presents five representative clusters.}

\yg{\begin{table}[htbp]
\centering
\caption{Representative clusters and their dominant hardware design categories.}
\label{tab:cluster_examples}
\resizebox{0.46\textwidth}{!}{
\begin{tabular}{clc}
\toprule
\textbf{Cluster} & \textbf{Dominant Category} & \textbf{Purity (\%)} \\
\midrule
\#12 & Combinational arithmetic (adders, multipliers) & 85 \\
\#37 & Finite state machines (FSMs) & 78 \\
\#54 & Memory interfaces (FIFO, RAM controllers) & 82 \\
\#71 & Clock domain crossing \& synchronizers & 76 \\
\#89 & Counter \& timer modules & 88 \\
\bottomrule
\end{tabular}}
\end{table}}

\yg{The average category purity across the 20 inspected clusters is 81.8\%, indicating that the majority of samples within each cluster share a coherent hardware design function. This correspondence arises naturally from the structural features: FSM clusters exhibit distinctive CFG patterns (cyclic structures, high branching density), arithmetic clusters show characteristic Netlist features (specific gate type distributions dominated by arithmetic gates), and memory interface clusters have unique AST patterns (instantiation-heavy module hierarchies). These observations confirm that the Verilog-specific structural features effectively encode hardware design semantics, enabling the clustering step to produce interpretable and functionally meaningful partitions without requiring explicit category labels.}

\subsection{Threats to Validity}

\noindent\textbf{Internal Validity Threats.}
The primary threat to internal validity lies in the accuracy of method implementation. To mitigate this issue to the greatest extent possible: we directly reuse publicly available implementations for open-source baseline models, and reproduce models in strict accordance with the original descriptions in baseline papers when off-the-shelf implementations are unavailable. Meanwhile, we have carefully validated our proposed method. 
To address the limitations of baseline models, we also selected prevalent methods from related research fields. 
Additionally, we adopted fixed random seeds to ensure the randomness of sampling processes.

Another potential threat concerns the testbench verification stage. We adopt the testbench generation approach of VeriCoder~\cite{wei2025vericoder} without further validating the quality of the generated testbenches, which may introduce false positives or false negatives. However, improving testbench generation is orthogonal to our goal of data selection. 
Moreover, our results show that the downstream IFD scoring and diversity-aware selection effectively compensate for such noise, as {\tool} consistently outperforms the VeriCoder baseline that relies solely on testbench filtering.

\noindent\textbf{External Validity Threats.}
The primary threats to external validity stem from the use of a single dataset and the lack of industrial-grade Verilog benchmarks. Relying on a single dataset may impede the comprehensive evaluation of {\tool}'s effectiveness.
We evaluated {\tool} on three general-purpose benchmarks (VerilogEval-v2, RTLLM, and ResBench), which demonstrates its generalizability, yet these benchmarks do not fully represent industrial-grade requirements. 

\noindent\textbf{Construct Validity Threats.}
Threats to construct validity pertain to the metrics used to evaluate the performance of {\tool} and comparative methods. We adopt \textbf{Pass@k} as the core evaluation metric; while it effectively assesses the functional correctness of Verilog code, it does not capture other domain-specific quality attributes of HDLs such as timing correctness and resource utilization.
This is a common limitation shared by all existing Verilog code generation benchmarks. 
We partially mitigate the risk of metric bias by evaluating across three complementary benchmarks with different problem distributions and difficulty levels, and by applying McNemar's test to confirm the statistical reliability of the observed improvements. 
Nevertheless, developing HDL-specific evaluation metrics that go beyond functional correctness remains an important direction for future work.

%% file: sections/7.related.tex
\section{Related Work}
\label{sec:related}

\yg{\begin{table*}[htbp]
\centering
\caption{Comparison of representative data selection methods.}
\label{tab:related_comparison}
\resizebox{\textwidth}{!}{
\begin{tabular}{lccccc}
\toprule
\textbf{Method} & \textbf{Quality Filtering} & \textbf{Diversity Modeling} & \textbf{Domain-Specific Features} & \textbf{Unified Pipeline} & \textbf{Target Domain} \\
\midrule
EL2N~\cite{paul2021deep} & Loss-based scoring & \ding{55} & \ding{55} & \ding{55} & General \\
Cherry~\cite{li2024quantity} & IFD scoring & \ding{55} & \ding{55} & \ding{55} & General \\
Diversify~\cite{yu2024diversify} & \ding{55} & Embedding clustering & \ding{55} & \ding{55} & General \\
SMART~\cite{renduchintala2024smart} & \ding{55} & Submodular optimization & \ding{55} & \ding{55} & General \\
RTLCoder~\cite{liu2024rtlcoder} & Syntax filtering & \ding{55} & \ding{55} & \ding{55} & Verilog \\
VeriCoder~\cite{wei2025vericoder} & Testbench verification & \ding{55} & \ding{55} & \ding{55} & Verilog \\
\midrule
{\tool} (Ours) & Testbench + IFD & Structural clustering & AST + CFG + Netlist & \ding{51} & Verilog \\
\bottomrule
\end{tabular}}
\end{table*}}

\subsection{\syh{Verilog Code Generation Methods}}
The widening productivity gap in modern integrated circuit design has made automated Verilog generation a critical EDA research direction. Driven by large language models (LLMs), natural language-to-synthesizable RTL Verilog generation has advanced significantly. \syh{We categorize existing methods as follows:}

\textbf{Training-Free \& Multi-Agent Collaborative Methods.}
Training-free methods enhance performance without parameter updates for resource-constrained scenarios. Pearce et al. \cite{pearce2020dave} pioneered DAVE, demonstrating English-to-Verilog feasibility, while Blocklove et al. \cite{blocklove2025automatically} integrated EDA tool feedback for iterative self-correction to improve syntactic correctness and synthesizability. For complex workflows, multi-agent systems decompose tasks into specialized subtasks with graph-based planning and EDA feedback for scalable designs. However, these methods are limited by base LLM domain knowledge and focus on module-level correctness with limited system-on-chip (SoC) support.

\textbf{Supervised Fine-Tuning for Domain Adaptation.}
SFT on Verilog-specific datasets bridges the domain gap between general code LLMs and hardware design. Thakur et al. \cite{thakur2024verigen} proposed VeriGen, the first comprehensive SFT framework fine-tuned on GitHub and textbook data, outperforming general models. Wei et al. \cite{wei2025vericoder} adopted Low-Rank Adaptation (LoRA) for parameter-efficient tuning, reducing costs while maintaining performance. Nevertheless, SFT relies on static instruction-code pairs, lacking dynamic EDA feedback for power, performance, and area (PPA) optimization.

\textbf{Reinforcement Learning for Alignment Optimization.}
RL aligns generated Verilog with hardware requirements via EDA environment interaction. Wang et al. \cite{wang2025large} developed a code-structure-guided RL framework using Abstract Syntax Tree (AST) similarity as reward to improve structural validity. Despite outperforming SFT, these frameworks struggle with multi-objective optimization balancing correctness, PPA, and security.

Overall, existing methods have advanced this field, but still suffer from three core limitations: insufficient hardware awareness, \syh{excessive focus} on module-level tasks, and inadequate comprehensive alignment. This work addresses these issues from the perspective of datasets.

\subsection{General-Purpose Data Selection Methods}
In general-purpose data selection for LLM pre-training and fine-tuning, mainstream methods fall into three paradigms: quality-oriented, diversity-oriented, and efficiency-oriented selection.

\textbf{Quality-Oriented Selection.}
Early data cleaning uses rule-based filtering and deduplication to remove low-quality text. Rae et al.~\cite{rae2021scaling} validated its value for large-scale pre-training, while Dodge et al.~\cite{dodge2021documenting} noted risks of bias and coverage loss. Later works built systematic pipelines: Penedo et al.~\cite{penedo2024fineweb} applied strict filtering and educational subset extraction for better performance. Model-based scoring has also become popular, where Paul et al.~\cite{paul2021deep} scored samples by training dynamics for effective pruning. Gunasekar et al.~\cite{gunasekar2023textbooks} enhanced reasoning using textbook-style data, and code-domain works focused on license compliance.

\textbf{Diversity-Oriented Selection.}
Diversity-oriented methods improve coverage and reduce redundancy, often using embedding-space clustering. Yu et al.~\cite{yu2024diversify} reweighted clusters to reduce homogeneity, while submodular optimization maximizes coverage. Renduchintala et al.~\cite{renduchintala2024smart} optimized task-level coverage, and Liu et al.~\cite{liu2025regmix} balanced diversity via domain ratio tuning.
\syh{Beyond textual embeddings, graph-based representations have been shown to capture structural dependencies in other domains~\cite{Yang17022026, CHEN2023109509, gu2026progressive,11301768}.
Ali et al.~\cite{ali2022exploiting} employed dynamic graph convolutional networks to model spatio-temporal structural correlations.
Alsarhan et al.~\cite{alsarhan2024phgcn} proposed a hierarchical graph convolutional architecture that captures multi-level granularity through hyper-graph partitioning.
Ali et al.~\cite{ali2025dynamic} further showed that multi-graph learning combined with clustering-based feature correlation can improve representation quality.
In addition, attention-driven multimodal fusion~\cite{ali2026exploiting} is effective for integrating heterogeneous information sources.
These findings motivate our use of Verilog-specific structural features (AST, CFG, and Netlist) fused with textual embeddings for diversity-aware data selection.}

\textbf{Efficiency-Oriented Selection.}
Efficiency-oriented methods pursue low-cost subset selection under computational constraints. Proxy models are widely used to estimate sample utility~\cite{mekala2024smaller}; Li et al.~\cite{li2024superfiltering} used small models to filter data for large ones, cutting costs sharply. Deduplication also improves efficiency and reduces overfitting~\cite{lee2022deduplicating}.

However, general methods are not tailored to Verilog, which has strict syntax, hierarchical modules, and hardware-specific criteria. Existing code-oriented works only handle general corpus cleaning, without Verilog-specific optimization.
{\tool} fills this gap by unifying quality-aware filtering, Verilog-specific structural feature extraction, and diversity-aware sampling.

\subsection{Verilog-Specific Data Selection Methods}
In contrast to well-studied data selection in NLP, research on data selection for Verilog code generation remains preliminary.

To mitigate Verilog dataset scarcity and low quality, Liu et al.~\cite{liu2024rtlcoder} proposed a Verilog-specific data generation and cleaning pipeline with automated syntax filtering to remove invalid samples and ensure training data reliability.
To address missing test cases, Wei et al.~\cite{wei2025vericoder} used GPT-4o to generate unit tests and validated functional correctness via compilation and simulation.
However, these methods only focus on syntactic and \syh{compilation} checks, without modeling the implicit logic and functional patterns in Verilog designs.

Unlike existing individual solutions, {\tool} is the first framework that systematically optimizes both quality and diversity for Verilog data selection.
Its core contribution is integrating multi-granularity quality filtering (testbench verification and IFD alignment scoring) with Verilog-specific structural representations (AST, CFG, Netlist), enabling diversity-aware selection that unifies general data selection techniques and domain-specific demands of Verilog code generation.

\yg{To provide a clear positioning of {\tool} relative to existing methods, Table~\ref{tab:related_comparison} summarizes the key characteristics of representative data selection approaches across five dimensions: quality filtering mechanism, diversity modeling capability, domain-specific feature utilization, whether a unified pipeline is provided, and the target domain. As shown, {\tool} is the only framework that jointly addresses all five aspects for the Verilog code generation domain.}

%% file: sections/8.conclusion.tex
\section{Conclusion and Future Work}
\label{sec:conclusion}
This study addresses the lack of systematic data selection methods for Verilog code generation.
We propose {\tool}, the first framework that jointly optimizes data quality and diversity for this domain, integrating multi-granularity quality filtering with Verilog-specific structural feature representations for diversity-aware selection.
Experiments on three LLMs and three benchmarks demonstrate that {\tool} achieves state-of-the-art performance using only 20\%--25\% of the training data.

Future work will focus on the following aspects: (1) Validating {\tool} on \syh{additional LLMs of varying sizes} to investigate its effectiveness; (2) Evaluating {\tool} on other HDLs to explore its generalizability.

%% file: reference.bib
@article{yang2026less,
  title={Less is more: Docstring compression in code generation},
  author={Yang, Guang and Zhou, Yu and Cheng, Wei and Zhang, Xiangyu and Chen, Xiang and Zhuo, Terry Yue and Liu, Ke and Zhou, Xin and Lo, David and Chen, Taolue},
  journal={ACM Transactions on Software Engineering and Methodology},
  volume={35},
  number={2},
  pages={1--31},
  year={2026},
  publisher={ACM New York, NY},
  doi = {10.1145/3735636}
}

@inproceedings{yang2025code,
  title={Code-DiTing: Automatic Evaluation of Code Generation without References or Test Cases},
  author={Yang, Guang and Zhou, Yu and Chen, Xiang and Zheng, Wei and Hu, Xing and Zhou, Xin and Lo, David and Chen, Taolue},
  booktitle={2025 40th IEEE/ACM International Conference on Automated Software Engineering (ASE)},
  pages={154--165},
  year={2025},
  organization={IEEE},
  doi = {10.1109/ASE63991.2025.00021}
}

@article{liu2024rtlcoder,
  title={Rtlcoder: Fully open-source and efficient llm-assisted rtl code generation technique},
  author={Liu, Shang and Fang, Wenji and Lu, Yao and Wang, Jing and Zhang, Qijun and Zhang, Hongce and Xie, Zhiyao},
  journal={IEEE Transactions on Computer-Aided Design of Integrated Circuits and Systems},
  volume={44},
  number={4},
  pages={1448--1461},
  year={2024},
  publisher={IEEE},
  doi = {10.1109/TCAD.2024.3483089}
}

@article{thakur2024verigen,
  title={Verigen: A large language model for verilog code generation},
  author={Thakur, Shailja and Ahmad, Baleegh and Pearce, Hammond and Tan, Benjamin and Dolan-Gavitt, Brendan and Karri, Ramesh and Garg, Siddharth},
  journal={ACM Transactions on Design Automation of Electronic Systems},
  volume={29},
  number={3},
  pages={1--31},
  year={2024},
  publisher={ACM New York, NY},
  doi = {10.1145/3643681}
}

@article{li2023starcoder,
  title={StarCoder: may the source be with you!},
  author={Li, Raymond and Allal, Loubna Ben and Zi, Yangtian and Muennigho, Niklas and Kocetkov, Denis and Mou, Chenghao and Marone, Marc and Akiki, Christopher and Li, Jia and Chim, Jenny and others},
  journal={Transactions on Machine Learning Research},
  volume={2023},
  year={2023},
  publisher={Transactions on Machine Learning Research},
  doi = {10.48550/arXiv.2305.06161}
}

@inproceedings{kandpal2022deduplicating,
  title={Deduplicating training data mitigates privacy risks in language models},
  author={Kandpal, Nikhil and Wallace, Eric and Raffel, Colin},
  booktitle={Proceedings of the 39th International Conference on Machine Learning},
  pages={10697--10707},
  year={2022},
  volume = 	 {162},
  organization={PMLR},
  doi = {10.48550/arXiv.2202.06539}
}

@inproceedings{wei2025vericoder,
  title={VeriCoder: Enhancing LLM-Based RTL Code Generation through Functional Correctness Validation},
  author={Wei, Anjiang and Tan, Huanmi and Suresh, Tarun and Mendoza, Daniel and Teixeira, Thiago SFX and Wang, Ke and Trippel, Caroline and Aiken, Alex},
  booktitle={NeurIPS 2025 Fourth Workshop on Deep Learning for Code},
  year={2025},
  doi = {10.48550/arXiv.2504.15659}
}

@article{loow2025simulation,
  title={The simulation semantics of synthesisable Verilog},
  author={L{\"o}{\"o}w, Andreas},
  journal={Proceedings of the ACM on Programming Languages},
  volume={9},
  number={OOPSLA1},
  pages={1295--1320},
  year={2025},
  publisher={ACM New York, NY, USA},
  doi = {10.1145/3720484}
}

@inproceedings{wolf2013yosys,
  title={Yosys-a free verilog synthesis suite},
  author={Wolf, Clifford and Glaser, Johann and Kepler, Johannes},
  booktitle={Proceedings of the 21st Austrian Workshop on Microelectronics (Austrochip)},
  volume={97},
  pages={1--6},
  year={2013},
  url = {https://yosyshq.net/yosys/files/yosys-austrochip2013.pdf}
}

@inproceedings{li2024quantity,
  title={From quantity to quality: Boosting llm performance with self-guided data selection for instruction tuning},
  author={Li, Ming and Zhang, Yong and Li, Zhitao and Chen, Jiuhai and Chen, Lichang and Cheng, Ning and Wang, Jianzong and Zhou, Tianyi and Xiao, Jing},
  booktitle={Proceedings of the 2024 Conference of the North American Chapter of the Association for Computational Linguistics: Human Language Technologies (Volume 1: Long Papers)},
  pages={7602--7635},
  year={2024},
  doi = "10.18653/v1/2024.naacl-long.421"
}

@inproceedings{liu2023verilogeval,
  title={Verilogeval: Evaluating large language models for verilog code generation},
  author={Liu, Mingjie and Pinckney, Nathaniel and Khailany, Brucek and Ren, Haoxing},
  booktitle={2023 IEEE/ACM International Conference on Computer Aided Design (ICCAD)},
  pages={1--8},
  year={2023},
  organization={IEEE},
  doi={10.1109/ICCAD57390.2023.10323812}
}

@inproceedings{lu2024rtllm,
  title={Rtllm: An open-source benchmark for design rtl generation with large language model},
  author={Lu, Yao and Liu, Shang and Zhang, Qijun and Xie, Zhiyao},
  booktitle={2024 29th Asia and South Pacific Design Automation Conference (ASP-DAC)},
  pages={722--727},
  year={2024},
  organization={IEEE},
  doi={10.1109/ASP-DAC58780.2024.10473904}
}

@article{ouyang2022training,
  title={Training language models to follow instructions with human feedback},
  author={Ouyang, Long and Wu, Jeffrey and Jiang, Xu and Almeida, Diogo and Wainwright, Carroll and Mishkin, Pamela and Zhang, Chong and Agarwal, Sandhini and Slama, Katarina and Ray, Alex and others},
  journal={Advances in Neural Information Processing Systems},
  volume={35},
  pages={27730--27744},
  year={2022},
  doi = {10.52202/068431-2011}
}

@inproceedings{hu2022lora,
  title={Lora: Low-rank adaptation of large language models.},
  author={Hu, Edward J and Shen, Yelong and Wallis, Phillip and Allen-Zhu, Zeyuan and Li, Yuanzhi and Wang, Shean and Wang, Liang and Chen, Weizhu and others},
  booktitle={International Conference on Learning Representations},
  year={2022},
  doi = {10.48550/arXiv.2106.09685}
}

@article{alon2018general,
  title={A general path-based representation for predicting program properties},
  author={Alon, Uri and Zilberstein, Meital and Levy, Omer and Yahav, Eran},
  journal={ACM SIGPLAN Notices},
  volume={53},
  number={4},
  pages={404--419},
  year={2018},
  publisher={ACM New York, NY, USA},
  doi = {10.1145/3296979.3192412}
}

@article{allen1970control,
  title={Control flow analysis},
  author={Allen, Frances E},
  journal={ACM SIGPLAN Notices},
  volume={5},
  number={7},
  pages={1--19},
  year={1970},
  publisher={ACM New York, NY, USA},
  doi = {10.1145/390013.808479}
}

@inproceedings{wang2022functionality,
  title={Functionality matters in netlist representation learning},
  author={Wang, Ziyi and Bai, Chen and He, Zhuolun and Zhang, Guangliang and Xu, Qiang and Ho, Tsung-Yi and Yu, Bei and Huang, Yu},
  booktitle={Proceedings of the 59th ACM/IEEE Design Automation Conference},
  pages={61--66},
  year={2022},
  doi = {10.1145/3489517.3530410}
}

@inproceedings{li2024superfiltering,
  title={Superfiltering: Weak-to-strong data filtering for fast instruction-tuning},
  author={Li, Ming and Zhang, Yong and He, Shwai and Li, Zhitao and Zhao, Hongyu and Wang, Jianzong and Cheng, Ning and Zhou, Tianyi},
  booktitle={Proceedings of the 62nd Annual Meeting of the Association for Computational Linguistics (Volume 1: Long Papers)},
  pages={14255--14273},
  year={2024},
  doi = {10.18653/v1/2024.acl-long.769}
}

@article{zhou2023lima,
  title={Lima: Less is more for alignment},
  author={Zhou, Chunting and Liu, Pengfei and Xu, Puxin and Iyer, Srinivasan and Sun, Jiao and Mao, Yuning and Ma, Xuezhe and Efrat, Avia and Yu, Ping and Yu, Lili and others},
  journal={Advances in Neural Information Processing Systems},
  volume={36},
  pages={55006--55021},
  year={2023},
  doi = {10.52202/075280-2400}
}

@inproceedings{guo2025resbench,
  title={Resbench: A resource-aware benchmark for llm-generated fpga designs},
  author={Guo, Ce and Zhao, Tong},
  booktitle={Proceedings of the 15th International Symposium on Highly Efficient Accelerators and Reconfigurable Technologies},
  pages={25--34},
  year={2025},
  doi = {10.1145/3728179.3728192}
}

@article{hui2024qwen2,
  title={Qwen2. 5-coder technical report},
  author={Hui, Binyuan and Yang, Jian and Cui, Zeyu and Yang, Jiaxi and Liu, Dayiheng and Zhang, Lei and Liu, Tianyu and Zhang, Jiajun and Yu, Bowen and Lu, Keming and others},
  journal={arXiv preprint arXiv:2409.12186},
  year={2024},
  doi = {10.48550/arXiv.2409.12186}
}

@article{roziere2023code,
  title={Code llama: Open foundation models for code},
  author={Roziere, Baptiste and Gehring, Jonas and Gloeckle, Fabian and Sootla, Sten and Gat, Itai and Tan, Xiaoqing Ellen and Adi, Yossi and Liu, Jingyu and Sauvestre, Romain and Remez, Tal and others},
  journal={arXiv preprint arXiv:2308.12950},
  year={2023},
  doi = {10.48550/arXiv.2308.12950}
}

@article{seed2025seed,
  title={Seed-coder: Let the code model curate data for itself},
  author={Seed, ByteDance and Zhang, Yuyu and Su, Jing and Sun, Yifan and Xi, Chenguang and Xiao, Xia and Zheng, Shen and Zhang, Anxiang and Liu, Kaibo and Zan, Daoguang and others},
  journal={arXiv preprint arXiv:2506.03524},
  year={2025},
  doi = {10.48550/arXiv.2506.03524}
}

@inproceedings{gao2024autovcoder,
  title={Autovcoder: A systematic framework for automated verilog code generation using llms},
  author={Gao, Mingzhe and Zhao, Jieru and Lin, Zhe and Ding, Wenchao and Hou, Xiaofeng and Feng, Yu and Li, Chao and Guo, Minyi},
  booktitle={2024 IEEE 42nd International Conference on Computer Design (ICCD)},
  pages={162--169},
  year={2024},
  organization={IEEE},
  doi={10.1109/ICCD63220.2024.00033}
}

@article{paul2021deep,
  title={Deep learning on a data diet: Finding important examples early in training},
  author={Paul, Mansheej and Ganguli, Surya and Dziugaite, Gintare Karolina},
  journal={Advances in Neural Information Processing Systems},
  volume={34},
  pages={20596--20607},
  year={2021},
  doi = {10.48550/arXiv.2107.07075}
}

@article{chen2021evaluating,
  title={Evaluating large language models trained on code},
  author={Chen, Mark and Tworek, Jerry and Jun, Heewoo and Yuan, Qiming and Pinto, Henrique Ponde De Oliveira and Kaplan, Jared and Edwards, Harri and Burda, Yuri and Joseph, Nicholas and Brockman, Greg and others},
  journal={arXiv preprint arXiv:2107.03374},
  year={2021},
  doi = {10.48550/arXiv.2107.03374}
}

@article{yu2024diversify,
  title={Diversify and conquer: Diversity-centric data selection with iterative refinement},
  author={Yu, Simon and Chen, Liangyu and Ahmadian, Sara and Fadaee, Marzieh},
  journal={arXiv preprint arXiv:2409.11378},
  year={2024},
  doi = {10.48550/arXiv.2409.11378}
}

@inproceedings{renduchintala2024smart,
  title={SMART: Submodular data mixture strategy for instruction tuning},
  author={Renduchintala, HSVNS Kowndinya and Bhatia, Sumit and Ramakrishnan, Ganesh},
  booktitle={Findings of the Association for Computational Linguistics: ACL 2024},
  pages={12916--12934},
  year={2024},
  doi = {10.18653/v1/2024.findings-acl.766}
}

@article{mcnemar1947note,
  title={Note on the sampling error of the difference between correlated proportions or percentages},
  author={McNemar, Quinn},
  journal={Psychometrika},
  volume={12},
  number={2},
  pages={153--157},
  year={1947},
  publisher={Springer-Verlag},
  doi = {10.1007/BF02295996}
}

@article{li2026less,
  title={Less is Enough: Synthesizing Diverse Data in Feature Space of LLMs},
  author={Li, Zhongzhi and Wu, Xuansheng and Li, Yijiang and Hu, Lijie and Liu, Ninghao},
  journal={arXiv preprint arXiv:2602.10388},
  year={2026},
  doi = {10.48550/arXiv.2602.10388}
}

@article{rae2021scaling,
  title={Scaling language models: Methods, analysis \& insights from training gopher},
  author={Rae, Jack W and Borgeaud, Sebastian and Cai, Trevor and Millican, Katie and Hoffmann, Jordan and Song, Francis and Aslanides, John and Henderson, Sarah and Ring, Roman and Young, Susannah and others},
  journal={arXiv preprint arXiv:2112.11446},
  year={2021},
  doi = {10.48550/arXiv.2112.11446}
}

@inproceedings{dodge2021documenting,
  title={Documenting large webtext corpora: A case study on the colossal clean crawled corpus},
  author={Dodge, Jesse and Sap, Maarten and Marasovi{\'c}, Ana and Agnew, William and Ilharco, Gabriel and Groeneveld, Dirk and Mitchell, Margaret and Gardner, Matt},
  booktitle={Proceedings of the 2021 Conference on Empirical Methods in Natural Language Processing},
  pages={1286--1305},
  year={2021},
  doi = {10.18653/v1/2021.emnlp-main.98}
}

@article{penedo2024fineweb,
  title={The fineweb datasets: Decanting the web for the finest text data at scale},
  author={Penedo, Guilherme and Kydl{\'\i}{\v{c}}ek, Hynek and Lozhkov, Anton and Mitchell, Margaret and Raffel, Colin and Von Werra, Leandro and Wolf, Thomas and others},
  journal={Advances in Neural Information Processing Systems},
  volume={37},
  pages={30811--30849},
  year={2024},
  doi = {10.52202/079017-0970}
}

@article{gunasekar2023textbooks,
  title={Textbooks are all you need},
  author={Gunasekar, Suriya and Zhang, Yi and Aneja, Jyoti and Mendes, Caio C{\'e}sar Teodoro and Del Giorno, Allie and Gopi, Sivakanth and Javaheripi, Mojan and Kauffmann, Piero and de Rosa, Gustavo and Saarikivi, Olli and others},
  journal={arXiv preprint arXiv:2306.11644},
  year={2023},
  doi = {10.48550/arXiv.2306.11644}
}

@inproceedings{liu2025regmix,
  title={RegMix: Data Mixture as Regression for Language Model Pre-training},
  author={Liu, Qian and Zheng, Xiaosen and Muennighoff, Niklas and Zeng, Guangtao and Dou, Longxu and Pang, Tianyu and Jiang, Jing and Lin, Min},
  booktitle={The Thirteenth International Conference on Learning Representations},
  year={2025},
  doi = {10.48550/arXiv.2407.01492}
}

@inproceedings{mekala2024smaller,
  title={Smaller language models are capable of selecting instruction-tuning training data for larger language models},
  author={Mekala, Dheeraj and Nguyen, Alex and Shang, Jingbo},
  booktitle={Findings of the Association for Computational Linguistics: ACL 2024},
  pages={10456--10470},
  year={2024},
  doi = {10.18653/v1/2024.findings-acl.623}
}

@inproceedings{lee2022deduplicating,
  title={Deduplicating training data makes language models better},
  author={Lee, Katherine and Ippolito, Daphne and Nystrom, Andrew and Zhang, Chiyuan and Eck, Douglas and Callison-Burch, Chris and Carlini, Nicholas},
  booktitle={Proceedings of the 60th Annual Meeting of the Association for Computational Linguistics (Volume 1: Long Papers)},
  pages={8424--8445},
  year={2022},
  doi = {10.18653/v1/2022.acl-long.577}
}

@inproceedings{pearce2020dave,
  title={Dave: Deriving automatically verilog from english},
  author={Pearce, Hammond and Tan, Benjamin and Karri, Ramesh},
  booktitle={Proceedings of the 2020 ACM/IEEE Workshop on Machine Learning for CAD},
  pages={27--32},
  year={2020},
  doi = {10.1145/3380446.3430634}
}

@article{blocklove2025automatically,
  title={Automatically improving llm-based verilog generation using eda tool feedback},
  author={Blocklove, Jason and Thakur, Shailja and Tan, Benjamin and Pearce, Hammond and Garg, Siddharth and Karri, Ramesh},
  journal={ACM Transactions on Design Automation of Electronic Systems},
  volume={30},
  number={6},
  pages={1--26},
  year={2025},
  publisher={ACM New York, NY},
  doi = {10.1145/3723876}
}

@inproceedings{wang2025large,
  title={Large language model for verilog generation with code-structure-guided reinforcement learning},
  author={Wang, Ning and Yao, Bingkun and Zhou, Jie and Hu, Yuchen and Wang, Xi and Jiang, Zhe and Guan, Nan},
  booktitle={2025 IEEE International Conference on LLM-Aided Design (ICLAD)},
  pages={164--170},
  year={2025},
  organization={IEEE},
  doi={10.1109/ICLAD65226.2025.00025}
}

@article{yubeaton2025verithoughts,
  title={VeriThoughts: Enabling automated Verilog code generation using reasoning and formal verification},
  author={Yubeaton, Patrick and Nakkab, Andre and Xiao, Weihua},
  journal={Advances in Neural Information Processing Systems (NeurIPS)},
  year={2025},
  doi = {10.52202/085713-0288}
}

@inproceedings{domingos2000unified,
  title={A unified bias-variance decomposition},
  author={Domingos, Pedro},
  booktitle={Proceedings of the 17th International Conference on Machine Learning},
  pages={231--238},
  year={2000},
  doi = {10.5555/645529.657784}
}

@article{ali2022exploiting,
  title={Exploiting dynamic spatio-temporal graph convolutional neural networks for citywide traffic flows prediction},
  author={Ali, Ahmad and Zhu, Yanmin and Zakarya, Muhammad},
  journal={Neural Networks},
  volume={145},
  pages={233--247},
  year={2022},
  publisher={Elsevier},
  doi = {10.1016/j.neunet.2021.10.021}
}

@article{alsarhan2024phgcn,
  title={{PH-GCN}: Boosting human action recognition through multi-level granularity with pair-wise hyper {GCN}},
  author={Alsarhan, Tamam and Ali, Syed Sadaf and Ganapathi, Iyyakutti Iyappan and Ali, Ahmad and Werghi, Naoufel},
  journal={IEEE Access},
  volume={12},
  pages={162608--162621},
  year={2024},
  publisher={IEEE},
  doi={10.1109/ACCESS.2024.3477321}
}

@article{ali2025dynamic,
  title={Dynamic multi-graph spatio-temporal learning for citywide traffic flow prediction in transportation systems},
  author={Ali, Ahmad and Naeem, H.M. Yasir and Sharafian, Amin and Qiu, Li and Wu, Zongze and Bai, Xiaoshan},
  journal={Chaos, Solitons \& Fractals},
  volume={199},
  year={2025},
  publisher={Elsevier},
  doi = {10.1016/j.chaos.2025.116898}
}

@article{ali2026exploiting,
  title={Exploiting attention-driven weather-aware multimodal spatio-temporal fusion for urban traffic flow prediction},
  author={Ali, Ahmad and others},
  journal={Future Generation Computer Systems},
  year={2026},
  publisher={Elsevier},
  doi = {10.1016/j.future.2026.108559}
}

@article{breiman2001random,
  title={Random forests},
  author={Breiman, Leo},
  journal={Machine Learning},
  volume={45},
  number={1},
  pages={5--32},
  year={2001},
  publisher={Springer},
  doi = {10.1023/A:1010933404324}
}

@inproceedings{feng2020codebert,
  title={CodeBERT: A Pre-Trained Model for Programming and Natural Languages},
  author={Feng, Zhangyin and Guo, Daya and Tang, Duyu and Duan, Nan and Feng, Xiaocheng and Gong, Ming and Shou, Linjun and Qin, Bing and Liu, Ting and Jiang, Daxin and Zhou, Ming},
  booktitle={Findings of the Association for Computational Linguistics: EMNLP 2020},
  pages={1536--1547},
  year={2020},
  doi = {10.18653/v1/2020.findings-emnlp.139}
}

@inproceedings{guo2022unixcoder,
  title={UniXcoder: Unified Cross-Modal Pre-training for Code Representation},
  author={Guo, Daya and Lu, Shuai and Duan, Nan and Wang, Yanlin and Zhou, Ming and Yin, Jian},
  booktitle={Proceedings of the 60th Annual Meeting of the Association for Computational Linguistics},
  pages={7212--7225},
  year={2022},
  doi = {10.18653/v1/2022.acl-long.499}
}

@article{yang2025large,
  title={Large Language Model for Verilog Code Generation: Literature Review and the Road Ahead},
  author={Yang, Guang and Zheng, Wei and Chen, Xiang and Liang, Dong and Hu, Peng and Yang, Yukui and Peng, Shaohua and Li, Zhenghan and Feng, Jiahui and Wei, Xiao and Sun, Kexin and Ma, Deyuan and Cheng, Haotian and Shen, Yiheng and Hu, Xing and Zhuo, Terry Yue and Lo, David},
  year = {2026},
  publisher = {Association for Computing Machinery},
  address = {New York, NY, USA},
  issn = {0360-0300},
  note = {Just Accepted},
  journal = {ACM Computing Surveys},
  doi = {10.1145/3841636}
}

@article{Yang17022026,
author = {Guanghai Yang and Yulian Jiang and Shenquan Wang and Kun Chen},
title = {VinsFusion-Line: Binocular Vision Inertial Navigation Real-Time SLAM System Based on Line Features},
journal = {Cybernetics and Systems},
volume = {57},
number = {2},
pages = {350--375},
year = {2026},
publisher = {Taylor \& Francis},
doi = {10.1080/01969722.2025.2606002},
URL = {https://doi.org/10.1080/01969722.2025.2606002},
eprint = {https://doi.org/10.1080/01969722.2025.2606002}
}

@article{CHEN2023109509,
title = {Sustainable interior design: A new approach to intelligent design and automated manufacturing based on Grasshopper},
journal = {Computers \& Industrial Engineering},
volume = {183},
pages = {109509},
year = {2023},
issn = {0360-8352},
doi = {https://doi.org/10.1016/j.cie.2023.109509},
url = {https://www.sciencedirect.com/science/article/pii/S0360835223005338},
author = {Junming Chen and Zichun Shao and Han Zhu and Yilin Chen and Yutian Li and Zhengfang Zeng and Yifan Yang and Junjie Wu and Bin Hu},
}

@article{gu2026progressive,
  author    = {Gu, Xinyuan and Chen, Junming},
  title     = {Progressive degradation-aware distillation for robust indoor object detection},
  journal   = {Scientific Reports},
  year      = {2026},
  doi       = {10.1038/s41598-026-63984-0},
  note      = {Advance online publication},
  url       = {https://www.nature.com/articles/s41598-026-63984-0}
}

@ARTICLE{11301768,
  author={Lv, Haotian and Li, Chao and Dai, Jiangbo and Zhang, Yuhui and Fan, Zepeng and Tan, Yiqiu and Wang, Dawei and Xie, Binglei},
  journal={IEEE Transactions on Geoscience and Remote Sensing}, 
  title={Lightweight Framework for Underground Pipeline Recognition and Spatial Localization Based on Multiview 2-D GPR Images}, 
  year={2025},
  volume={63},
  number={},
  pages={1-15},
  doi={10.1109/TGRS.2025.3645032}
}
